\documentclass[10pt,journal]{IEEEtran}
\IEEEoverridecommandlockouts
\usepackage{multirow}
\usepackage[utf8]{inputenc}
\usepackage{CJKutf8}
\usepackage{comment}
\usepackage{graphicx}
\usepackage{amsmath,amsthm,amsfonts,amssymb}
\usepackage{cite}
\usepackage{bm}
\usepackage{bbm}
\usepackage{url}
\usepackage{cancel}
\usepackage{array}
\usepackage{color,soul}
\usepackage{multirow}
\usepackage{booktabs}
\usepackage[table,xcdraw]{xcolor}
\usepackage{enumitem}
\usepackage{subcaption}

\usepackage[english]{babel}
\usepackage{algorithm}
\usepackage[noend]{algpseudocode}
\usepackage{setspace}

\theoremstyle{plain}
\newtheorem{thm}{Theorem}
\newtheorem{lem}[thm]{Lemma}

\newtheorem{rem}{Remark}

\allowdisplaybreaks[4]

\begin{document}
\title{Unanticipated Adversarial Robustness of \\ Semantic Communication}

\author{Runxin Zhang,
Yulin Shao,
Hongyu An,
Zhijin Qin,
Kaibin Huang
\thanks{R. Zhang and Z. Qin are with the Department of Electronic Engineering, Tsinghua University, Beijing 100084, China.
This work was performed during R. Zhang's visit at The University of Hong Kong.
}
\thanks{Y. Shao, H. An, and K. Huang are with the Department of Electrical and Computer Engineering, The University of Hong Kong, Hong Kong, China.}
\thanks{Correspondence: {ylshao@hku.hk}.}
}

\maketitle

\begin{abstract}
Semantic communication, enabled by deep joint source-channel coding (DeepJSCC), is widely expected to inherit the vulnerability of deep learning to adversarial perturbations. This paper challenges this prevailing belief and reveals a counterintuitive finding: semantic communication systems exhibit unanticipated adversarial robustness that can exceed that of classical separate source-channel coding systems. 
On the theoretical front, we establish fundamental bounds on the minimum attack power required to induce a target distortion, overcoming the analytical intractability of highly nonlinear DeepJSCC models by leveraging Lipschitz smoothness. We prove that the implicit regularization from noisy training forces decoder smoothness, a property that inherently provides built-in protection against adversarial attacks.
To enable rigorous and fair comparison, we develop two novel attack methodologies that address previously unexplored vulnerabilities: a structure-aware vulnerable set attack that, for the first time, exploits graph-theoretic vulnerabilities in LDPC codes to induce decoding failure with minimal energy, and a progressive gradient ascent attack that leverages the differentiability of DeepJSCC to efficiently find minimum-power perturbations. Designing such attacks is challenging, as classical systems lack gradient information while semantic systems require navigating high-dimensional, non-convex spaces; our methods fill these critical gaps in the literature.
Extensive experiments demonstrate that semantic communication requires up to $14$-$16\times$ more attack power to achieve the same distortion as classical systems, empirically substantiating its superior robustness.
\end{abstract}

\begin{IEEEkeywords}
Semantic communication, DeepJSCC, Tanner graph, adversarial robustness, Lipschitz smoothness.
\end{IEEEkeywords}

\section{Introduction}
Semantic communication is an emerging paradigm that aims to shift the goal of wireless systems from faithfully transmitting bits to conveying task-relevant meaning \cite{weaver1953recent,bourtsoulatze2019deep,xie2021deep,shao2024theory,zhang2025towards}.
Unlike classical designs that optimize symbol-level fidelity \cite{shannon1948mathematical}, semantic communication embraces lossy yet meaning-preserving encoding that retains task-critical information \cite{shao2021learning,wu2024deep,bian2025process}.
Enabled by advances in deep learning (DL), deep joint source-channel coding (DeepJSCC) \cite{bourtsoulatze2019deep,farsad2018deep} has emerged as the prototypical realization of the semantic communication paradigm, achieving strong rate-distortion performance, robustness to channel noise, and high compression efficiency across diverse communication and perception tasks \cite{wu2025deep}.
Recently, 3GPP's first 6G technical report (Release~20, TR~22.870 \cite{3gppTR22870}) formally includes two semantic communication use cases, underscoring its role as a key enabler for 6G and AI-native networks.

Despite these advances, semantic communication remains a nascent and largely empirical field with black-box architectures and limited theoretical understanding of reliability and security. These gaps are especially critical in adversarial or safety-sensitive settings, where semantic errors can cause task-level failures beyond bit errors.
For instance, a small adversarial perturbation that alters the meaning of transmitted features could cause a robotic arm to misinterpret an instruction, a vehicle to misidentify a pedestrian, or a drone swarm to lose coordination.
Addressing such risks demands a principled understanding of the vulnerabilities and intrinsic resilience mechanisms of semantic systems.

This paper aims to address a foundational yet underexplored question: {\it Is semantic communication fundamentally more vulnerable to adversarial attacks than classical systems, or does it possess inherent robustness that has been overlooked?}

The prevailing belief in the field holds that semantic communication, due to its reliance on DL, is inherently susceptible to adversarial perturbations \cite{sagduyu2023semantic}. In contrast, conventional systems are protected by mathematically structured channel codes with provable guarantees against random noise. This contrast in design philosophies has led to an intuitive assumption that semantic communication must be more fragile in adversarial environments \cite{sagduyu2023semantic,tang2023gan,zhou2025rome,nan2023physical,hu2022robust}. However, this assumption has not been rigorously tested or theoretically grounded, and the true security behavior of semantic communication remains an open question.

To answer this question, we undertake a systematic comparative analysis that yields several unexpected findings.

\begin{itemize}[leftmargin=0.45cm]
    \item We challenge the prevailing belief that semantic communication is vulnerable to adversarial attacks and reveal a counterintuitive result: semantic communication systems exhibits a previously under-appreciated robustness against adversarial attacks, even without explicit defense mechanisms. That is, the same feature-based, lossy encoding mechanisms that power semantic efficiency can also confer natural resilience to perturbations.
    \item Our investigation reveals a critical blind spot in classical communication research: the adversarial robustness of structured channel codes. Despite decades of work on coding theory under random noise models, the behavior of codes such as low-density parity-check (LDPC) under intelligently crafted, targeted attacks has remained almost entirely unexplored. This paper provides the first systematic analysis of such scenarios, opening the door to a broader, security-aware perspective in coding theory.
    \item We place these findings on a firm theoretical footing through a bounding analysis that addresses the analytical intractability of highly nonlinear DeepJSCC models. By leveraging Lipschitz continuity and first-order Taylor expansion of the decoder, we derive fundamental lower bounds on the minimum attack power required to induce a target distortion, and establish a sufficient condition under which semantic communication provably outperforms classical systems in adversarial settings. The analysis reveals that the built-in resilience of semantic communication is rooted in the smoothness of the decoder, which is implicitly regularized by noisy training of DeepJSCC.
    \item We validate our findings through extensive numerical experiments using a suite of purpose-built attack methodologies that fill critical gaps in the literature. For classical systems, we develop a structure-aware vulnerable set attack that, for the first time, exploits graph-theoretic vulnerabilities (e.g., short cycles, weak constraint subspaces) in LDPC codes to induce decoding failure with minimal energy. For semantic systems, we propose a progressive gradient ascent (PGA) attack that leverages the differentiability of DeepJSCC to efficiently navigate the high-dimensional, non-convex signal space. Experiments on image transmission and massive MIMO CSI feedback corroborate our theoretical analysis and demonstrate that semantic communication consistently requires an order-of-magnitude higher attack power to achieve the same distortion as classical systems, confirming that its inherent resilience is not merely a theoretical possibility but a practically observable property across diverse tasks.
\end{itemize}

\section{Related Work}\label{sec:II}
To establish the necessary background for our comparative analysis, this section examines four interconnected research threads: semantic communication and DeepJSCC, adversarial machine learning, the adversarial robustness of semantic systems, and the classical perspective on intelligent jamming.

\textit{Semantic communication and DeepJSCC}:
The concept of semantic communication traces back to Weaver's extension \cite{weaver1953recent} of Shannon's theory \cite{shannon1948mathematical}, which reframed the goal of communication as conveying meaning beyond bit transmission.
Advances in DL have made this vision practical, with DeepJSCC \cite{bourtsoulatze2019deep,farsad2018deep} emerging as a key technology.
Building on this foundation, subsequent work introduced enabling techniques such as shared knowledge bases \cite{shi2021new,zhou2022cognitive,jiang2022reliable, hu2023robust}, the information bottleneck \cite{tishby2015deep} for distilling task-relevant features \cite{liu2021rate,sana2022learning,shao2021learning,xie2023robust}, and generative semantic communications \cite{grassucci2024enhancing, jiang2024diffsc}. 
These advances are now shaping early 6G research and pre-standardization efforts, such as DeepJSCC-based CSI feedback \cite{guo2022overview,gizzini2020deep}, task-oriented hybrid ARQ \cite{hu2025semharq}, and semantic encoding in the user plane \cite{wang2025explicit}.

\textit{Adversarial DL}:
The vulnerability of deep neural networks (DNNs) to adversarial perturbations is a well-established challenge in DL \cite{szegedy2013intriguing}. Attacks are generally categorized by adversary knowledge: white-box attacks leverage model gradients (e.g., FGSM \cite{goodfellow2014explaining}, PGD \cite{madry2017towards}, C\&W \cite{carlini2017towards}), while black-box attacks rely on model outputs (e.g., transfer- and query-based methods \cite{papernot2017practical,ilyas2018black,brendel2017decision,shi2023reinforcement}). In response, various defenses have been proposed. Adversarial training, which incorporates adversarial examples during optimization \cite{goodfellow2014explaining,madry2017towards,maini2020adversarial,pang2020bag}, remains the most empirically robust. Other approaches include feature denoising \cite{xie2019feature}, model distillation \cite{zi2021revisiting}, and certified defenses such as randomized smoothing \cite{cohen2019certified}. Collectively, this body of work provides essential tools for analyzing the adversarial robustness of DL-based systems.

\textit{Adversarial robustness of semantic communications}:
The DL foundations of semantic communication have naturally led to early studies of its adversarial robustness, based on the assumption that DeepJSCC inherits the vulnerabilities of its underlying DNNs. Under this premise, several works demonstrated the disruptive potential of attacks: \cite{sagduyu2023semantic} applied FGSM perturbations to both source data and channel signals; \cite{tang2023gan} constructed the intelligent attacks based on generative adversarial network (GAN);  \cite{zhou2025rome} proposed a gradient-free perturbation generator capable of producing attacks at controllable power levels. Defenses have also been explored, including adversarial training \cite{nan2023physical,hu2022robust} and ensembling frameworks such as ROME-SC \cite{zhou2024robust}, which uses a dedicated perturbation generator and detector for dynamic defense.

However, the prevailing belief that semantic communication is inherently more vulnerable than classical systems remains unexamined. More critically, the focus on DL vulnerability may have obscured an important, complementary possibility: the defining feature of DeepJSCC, i.e., its end-to-end training for channel noise resilience \cite{wu2025deep}, may materially affect its adversarial behavior. We posit that this noisy-channel training regimen makes DeepJSCC more than just a generic DNN; it potentially confers an intrinsic and unanticipated robustness to attacks. This paper aims to provide the first rigorous test of this hypothesis by directly comparing semantic systems against their classical counterparts.

\textit{Adversarial robustness of classical systems}:
To enable a fair comparison, we must also understand how classical systems behave under attack. Intelligent jamming has primarily been studied from an information-theoretic perspective under the arbitrarily varying channel (AVC) model \cite{csiszar2002capacity}. The capacity of discrete AVCs was established in \cite{csiszar2002capacity}, and extended to continuous channels in \cite{hughes1988capacity}. These elegant capacity results, however, rely on random-coding ensembles and thus do not reveal how structured code families (e.g., LDPC or polar) behave or can be designed under intelligent attacks. This paper bridges that gap through the first systematic coding-theoretic study of structured codes and their iterative decoders under intelligent adversaries, advancing toward attack-resilient designs that embed security as a core principle of reliability.

\section{The Comparative Framework}\label{sec:III}
To rigorously evaluate the adversarial robustness of semantic communication, we establish a controlled environment where semantic and classical systems can be compared on equal footing. This section introduces a framework for assessing adversarial robustness under identical bandwidth, power, and channel conditions.

We consider a canonical source reconstruction task: a source vector $\bm{x} \in \mathbb{R}^M$ is transmitted over a wireless channel and reconstructed as $\widehat{\bm{x}}$, as illustrated in Fig.~\ref{fig:system}. The source dataset is denoted by $\Omega = \{\bm{x}^{(1)}, \bm{x}^{(2)}, \ldots, \bm{x}^{(K)}\}$.

The transmission occurs over a flat-fading channel with coefficient $h$. With transmitter-side phase compensation, the real-valued baseband received signal can be expressed as
\begin{equation}
 \bm{r} = |h| \bm{z} + \bm{\omega},   
\end{equation}
where $\bm{z} \in \mathbb{R}^N$ is the transmitted symbol vector with normalized power, and $\bm{\omega} \sim \mathcal{N}(0, \sigma^2_\omega \bm{I}_N)$ is independent and identically distributed (i.i.d.) additive white Gaussian noise (AWGN). The received SNR is given by $\eta \triangleq {|h|^2}/{\sigma^2_\omega}$.

\begin{figure}[t!]
\includegraphics[width=0.95\columnwidth]{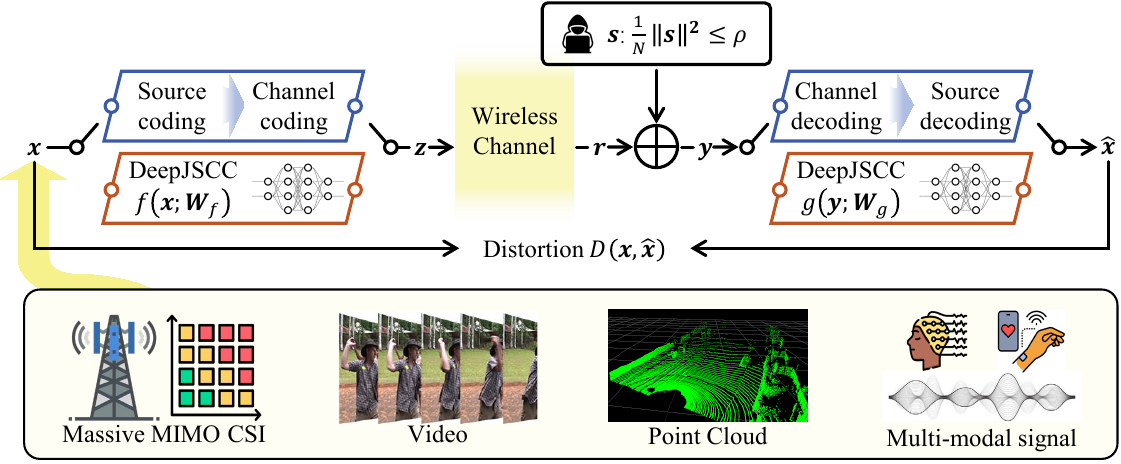}
\caption{A comparative framework for fair evaluation under equal bandwidth, transmit power, and channel conditions. The perturbation $\bm{s}$ is subject to a power constraint $\rho$, and the semantic fidelity is quantified by the distortion $D(\bm{x},\widehat{\bm{x}})$.
}
\label{fig:system}
\end{figure}

\subsection{Classical and Semantic Communications}
In a classical communication system, the transmitter processes source data $\bm{x}$ through separate source and channel coding (SSCC) stages to produce channel symbols $\bm{z}$, as depicted in Fig.~\ref{fig:system}. The receiver reverses these operations to reconstruct $\widehat{\bm{x}}$ from the received signal $\bm{r}$.

By contrast, semantic communication systems employ a DeepJSCC encoder $f(\bm{x}; \bm{W}_f)$ to directly map $\bm{x}$ to channel symbols $\bm{z}$, and a DeepJSCC decoder $g(\bm{r}; \bm{W}_g)$ to reconstruct the source from $\bm{r}$, where $\bm{W}_f$ and $\bm{W}_g$ denote the parameters of the encoder and decoder, respectively. 
DeepJSCC learns both encoder and decoder in an end-to-end fashion to minimize a loss function:
\begin{equation}
\mathcal{D}  \triangleq \frac{1}{K} \sum_{k=1}^{K} D\big(\bm{x}^{(k)}, \widehat{\bm{x}}^{(k)}\big),
\end{equation} 
where $D(\bm{x}, \widehat{\bm{x}})$ is a distortion measure for the specific source.\footnote{This framework is consistent with the principle of remote source coding \cite{witsenhausen2003indirect}, where the receiver reconstructs task-relevant functions of the source, thus covering task-oriented semantic communication \cite{shao2021learning}.}

\subsection{Adversarial Model} 
We consider an adversary that injects a malicious perturbation $\bm{s}\in\mathbb{R}^{N}$ into the channel, yielding a perturbed received signal $\bm{y} \triangleq \bm{r} + \bm{s}$. 
The adversary operates under a white-box assumption: it has full knowledge of the coding schemes (either the SSCC schemes in classical systems or the DeepJSCC encoder and decoder in semantic communication), and can observe the received signal $\bm{r}$ to craft sample-specific perturbations. This represents a worst-case scenario, aligning with adversarial evaluation practices in security and machine learning.

The adversary's objective is to cause significant reconstruction distortion $\mathcal{D}$ while minimizing its own power expenditure $\rho \triangleq \mathbb{E} [\|\bm{s}\|^2]$. Consequently, a natural measure of a system's robustness is the minimum attack power required to provoke a prescribed level of distortion. Formally, for a target distortion $\mathcal{D}^*$, we define
\begin{eqnarray}\label{eq:problem}
      &&\hspace{-1 cm}\rho^* = \min \rho,   \\
      &&\hspace{-1cm} \text{s.t.} \quad \mathcal{D} \geq \mathcal{D}^*. \notag
\end{eqnarray}
A system is deemed more robust if it requires higher attack power $\rho^*$ to achieve the distortion $\mathcal{D}^*$.

\section{Comparative Robustness Analysis}\label{sec:IV}
Building on the comparative framework established in the previous section, we now analyze the robustness of both classical and semantic communication systems to determine whether the prevailing belief that semantic systems are inherently more vulnerable holds true. Our analysis reveals a counterintuitive finding: despite being built upon DL architectures that are often susceptible to adversarial perturbations, semantic communication systems can, and in typical configurations do, exhibit greater robustness against adversarial attacks than their classical counterparts. Through rigorous derivation, we also uncover the underlying mechanisms and conditions that give rise to this unexpected resilience.

In an adversarial setting, the average distortion achieved by a communication system depends jointly on the transmission scheme and the attack model. To clearly distinguish between system configurations, we introduce the following subscript notation:
\begin{itemize}[leftmargin=0.5cm]
    \item Subscripts ``sscc'' and ``sem'' denote classical and semantic communication systems, respectively.
    \item Subscripts ``1'' and ``0'' indicate the presence or absence of adversarial attacks.
\end{itemize}
For instance, $\rho^*_{\text{sem},1}$ denotes the minimum attack power required for the semantic system to achieve a target distortion $\mathcal{D}^*$ under attack, while $\mathcal{D}_{\text{sscc}, 0}$ represents the distortion achieved by a classical SSCC system in the absence of any attack.

With these definitions in place, our objective is to compare $\rho^*_{\text{sem},1}$ and $\rho^*_{\text{sscc},1}$ in the subsequent analysis. This task is nontrivial for two main reasons. 
First, the DeepJSCC encoder and decoder are highly nonlinear DNNs operating over high-dimensional parameter spaces, which complicates analytical treatment. 
Second, the performance of classical communication systems lacks a closed-form expression under adversarial conditions. 
These fundamental structural differences and analytical intractabilities preclude a direct, explicit characterization of the distortion-attack power relationship for either system. 
To overcome this challenge, we adopt a bounding approach that yields sufficient conditions for comparison. Throughout this section, we assume mean squared error (MSE) distortion, i.e., $D(\bm{x}, \widehat{\bm{x}}) = \| \bm{x} - \widehat{\bm{x}} \|^2$.

\subsection{Attack Power Analysis of Semantic Communication}\label{sec:IVA}

In the semantic communication system, the DeepJSCC encoder and decoder are realized by DNNs. We assume the decoder is Lipschitz continuous, a common property in modern DNN architectures, with a Lipschitz constant $G$. Hence, for any two inputs $\bm{r}$ and $\bm{r}^\prime$, we have
\begin{equation}
\|g(\bm{r}; \bm{W}_g) - g(\bm{r}^\prime; \bm{W}_g)\| \leq G \|\bm{r}-\bm{r}^\prime\|.
\end{equation}

Adversarial perturbations are typically small in magnitude. Under this condition, the effect of an attack $\bm{s}$ on the decoder output can be approximated via a first-order Taylor expansion:
\begin{equation}\label{eq:taylor}
    g(\bm{y}; \bm{W}_g) \approx g(\bm{r}; \bm{W}_g) + \bm{\mathcal{J}}_{g}(\bm{r}) \bm{s},
\end{equation}
where $\bm{\mathcal{J}}_{g}(\bm{r}) \in \mathbb{C}^{M \times N}$ denotes the Jacobian matrix of the DeepJSCC decoder evaluated at $\bm{r}$. The attack power is denoted as $\|\bm{s}\|^2 = \rho_{\text{sem},1}$.

To characterize the average distortion $\mathcal{D}_{\text{sem},1}$, we first examine the distortion for a single source sample $\bm{x}$, yielding
\begin{eqnarray}\label{e:D_p_sem_1}
\hspace{-0.5cm}&&  \hspace{-0.4cm}D_{\text{sem},1}  = \left(g(\bm{y}; \bm{W}_g) - \bm{x}\right)^\top \left(g(\bm{y}; \bm{W}_g) - \bm{x}\right) \\
\hspace{-0.5cm}&& \approx \left(   g(\bm{r};\! \bm{W}_{\!g})  \! +  \!  \bm{\mathcal{J}}_{\! g}(\bm{r};\! \bm{W}_{\!g}) \bm{s} \! - \! \bm{x}   \right)^{ \! \top}  \! \left(   g(\bm{r};\! \bm{W}_{\!g})  \! +  \!  \bm{\mathcal{J}}_{ g}(\bm{r}) \bm{s} \! - \! \bm{x}   \right) \notag\\
\hspace{-0.5cm}&& =  \underbrace{\left\|g(\bm{r}; \bm{W}_g ) - \bm{x}\right\|^2}_{D_{\text{sem},0}}  +  \underbrace{\left(\bm{\mathcal{J}}_{ g}(\bm{r}) \bm{s} \right)^{\top}  \bm{\mathcal{J}}_{\! g}(\bm{r}) \bm{s}}_{B}  \notag\\
\hspace{-0.5cm}&&\hspace{4cm}  +  \underbrace{2 (g(\bm{r} ; \bm{W}_g) - \bm{x})^{\top}  \bm{\mathcal{J}}_{\! g}(\bm{r}) \bm{s}}_{C}, \notag
\end{eqnarray}
where $D_{\text{sem},0}$ denotes the distortion in the absence of an attack (i.e., with only channel noise).

Since all terms in \eqref{e:D_p_sem_1} are scalars, we square the expression to relate $B$ and $C$:
\begin{eqnarray}\label{e:D_inequality}
\hspace{-0.3cm}&&\hspace{-0.4cm} C^2 = \big( D_{\text{sem},1}   -   D_{\text{sem},0}   \big)^{ 2}   +   B^2   -   2 \big(D_{\text{sem},1}   -   D_{\text{sem},0} \big) B \\
\hspace{-0.3cm}&& = 4  \big( g(\bm{r}; \bm{W}_g ) - \bm{x}\big)^{  \top}  \bm{\mathcal{J}}_{  g}(\bm{r} ) \bm{s} \left( \bm{\mathcal{J}}_{  g}(\bm{r} ) \bm{s} \right)^{  \top}   \big(g(\bm{r}; \bm{W}_g ) - \bm{x}\big) \notag\\
\hspace{-0.3cm}&& \hspace{-0.4cm} = \!  4 \text{tr}  \left\{  \!   \bm{\mathcal{J}}_{ \! g}(\bm{r} ) \bm{s}    \left(   \bm{\mathcal{J}}_{\! g}(\bm{r} ) \bm{s}   \right)^{ \! \top}   \big(g(\bm{r}; \bm{W}_{\!g} ) - \bm{x}\big) \big(g(\bm{r} ; \bm{W}_g) - \bm{x}\big)^{ \! \top}  \! \right\} \notag\\
\hspace{-0.30cm}&&\hspace{-0.05cm} 
\stackrel{(a)}{\leq} 4 \text{tr}\left\{ \bm{\mathcal{J}}_{  g}(\bm{r} ) \bm{s} \left( \bm{\mathcal{J}}_{  g}(\bm{r} ) \bm{s} \right)^\top \right\}\notag\\
\hspace{-0.30cm}&& \hspace{0.8cm} \text{tr}\left\{ (g(\bm{r}; \bm{W}_g ) - \bm{x}) (g(\bm{r}; \bm{W}_g ) - \bm{x})^\top \right\}  
= 4 B   D_{\text{sem},0}, \notag
\end{eqnarray}
where (a) follows from the fact that both matrices inside the trace are positive semi-definite.

By rearranging the inequality in \eqref{e:D_inequality} and invoking the natural assumption that the distortion under attack is at least as large as the distortion without attack, i.e., $ D_{\text{sem},1} - D_{\text{sem},0} \geq 0$,  we can bound the term $B$ as
\begin{equation}\label{e:range_B}
\hspace{-0.2cm} \sqrt{ \! D_{\text{sem},1}} \! - \! \sqrt{ \! D_{\text{sem},0}} \! \leq \! \sqrt{ \! B} \! \leq \! \sqrt{ \! D_{\text{sem},1}} \! + \! \sqrt{ \!D_{\text{sem},0}}.
\end{equation}

Because \eqref{e:range_B} holds sample-wise, it remains valid after taking expectations over the source and channel realizations. Focusing on the left inequality and applying the expectation operator yields
\begin{eqnarray}\label{e:rho_sem_inequality}
\hspace{-0.3cm}&& \hspace{-0.8cm} \mathcal{D}_{\text{sem},1} = \mathbb{E}_{\bm{r},\bm{s}}[D_{\text{sem},1}] \notag\\
\hspace{-0.3cm}&&  \leq \mathbb{E}_{\bm{r},\bm{s}}[B] + \mathbb{E}_{\bm{r},\bm{s}}[D_{\text{sem},0}] + \mathbb{E}_{\bm{r},\bm{s}}[2\sqrt{B D_{\text{sem},0}}] \notag\\
\hspace{-0.3cm}&& = \mathbb{E}_{\bm{r},\bm{s}}[B] + \mathcal{D}_{\text{sem},0} + \mathbb{E}_{\bm{r},\bm{s}}[2\sqrt{B D_{\text{sem},0}}] \notag\\
\hspace{-0.3cm}&&\hspace{-0.05cm}  \stackrel{(a)}{\leq} \mathbb{E}_{\bm{r},\bm{s}}[B] + \mathcal{D}_{\text{sem},0} + 2\sqrt{ \mathbb{E}_{\bm{r},\bm{s}}[B D_{\text{sem},0}] } \notag\\
\hspace{-0.3cm}&&\hspace{-0.05cm}  \stackrel{(b)}{\leq}  \mathbb{E}_{\bm{s}}  \left[   G^2 \|\bm{s}\|^2 \right]  +  \mathcal{D}_{\text{sem},0}  +  2  \sqrt{  \mathbb{E}_{\bm{r},\bm{s}}  \left[  G^2  \|\bm{s}\|^2  D_{\text{sem},0}  \right] } \notag\\
\hspace{-0.3cm}&&\hspace{-0.05cm}   \stackrel{(c)}{=} G^2 \rho_{\text{sem},1} + \mathcal{D}_{\text{sem},0} + 2 G \sqrt{ \rho_{\text{sem},1} \mathcal{D}_{\text{sem},0} } \notag\\
\hspace{-0.3cm}&& = \left( G\sqrt{\rho_{\text{sem},1}} + \sqrt{\mathcal{D}_{\text{sem},0}} \right)^2,
\end{eqnarray}
where (a) follows from the concavity of the square root function (Jensen's inequality); (b) follows from the $G$-Lipschitz continuity of the decoder, i.e., $ \|\bm{\mathcal{J}}_{g}(\bm{r} )\bm{s} \| \leq G \, \|\bm{s}  \|$, and hence $ B \leq G^2 \|\bm{s}  \|^2$; (c) follows from the total attack power constraint $ \|\bm{s}  \|^2 = \rho_{\text{sem},1}$.
From \eqref{e:rho_sem_inequality} we obtain a lower bound on the required attack power.

\begin{lem}\label{lem:bound1_rho_sem}
Consider a semantic communication system under adversarial attack. The minimum attack power needed to achieve an average distortion $\mathcal{D}_{\text{sem},1}$ satisfies
\begin{equation}
\rho_{\text{sem},1} \geq 
\frac{\bigl( \sqrt{\mathcal{D}_{\text{sem},1}}
- \sqrt{\mathcal{D}_{\text{sem},0}} \bigr)^2}{G^2},
\end{equation}
where $\mathcal{D}_{\text{sem},0}$ is the distortion in the absence of an attack.
\end{lem}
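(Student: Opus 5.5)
The lemma is essentially a rearrangement of the chain \eqref{e:rho_sem_inequality}, so I will derive it by inverting that upper bound on the attacked distortion. I will take as given the first-order model \eqref{eq:taylor} and the per-sample decomposition \eqref{e:D_p_sem_1}, namely $D_{\text{sem},1}=D_{\text{sem},0}+B+C$. From these, the per-sample bracket \eqref{e:range_B} follows by the Cauchy--Schwarz step \eqref{e:D_inequality}, i.e.\ $C^2\le 4BD_{\text{sem},0}$. I will redo this step directly as $|C|\le 2\sqrt{B}\sqrt{D_{\text{sem},0}}$, which is a cleaner route than the trace argument. It gives
\[
D_{\text{sem},1}\le \bigl(\sqrt{B}+\sqrt{D_{\text{sem},0}}\bigr)^2
\]
sample-wise, without needing the assumption $D_{\text{sem},1}\ge D_{\text{sem},0}$.

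Next I will take expectations over the source, channel noise and attack, and bound the cross term with Jensen (or Cauchy--Schwarz):
\[
\mathbb{E}\bigl[\sqrt{BD_{\text{sem},0}}\bigr]\le\sqrt{\mathbb{E}[B]\,\mathbb{E}[D_{\text{sem},0}]}.
\]
This form is preferable to the paper's $\sqrt{\mathbb{E}[BD_{\text{sem},0}]}$, because it decouples the two factors even when $\|\bm{s}\|$ depends on $\bm{r}$. The Lipschitz property bounds the Jacobian's operator norm by $G$, so $B\le G^2\|\bm{s}\|^2$ and $\mathbb{E}[B]\le G^2\rho_{\text{sem},1}$. Together these yield
\[
\mathcal{D}_{\text{sem},1}\le\bigl(G\sqrt{\rho_{\text{sem},1}}+\sqrt{\mathcal{D}_{\text{sem},0}}\bigr)^2 .
\]

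Finally I will invert this bound. Taking nonnegative square roots gives $\sqrt{\mathcal{D}_{\text{sem},1}}-\sqrt{\mathcal{D}_{\text{sem},0}}\le G\sqrt{\rho_{\text{sem},1}}$. If the left side is nonnegative, squaring and dividing by $G^2$ gives the claim. If the left side is negative, the target distortion is below the no-attack distortion. The bound then says $\rho\ge$ a positive number, which would be false for $\rho=0$. So the lemma is to be read for $\mathcal{D}_{\text{sem},1}\ge\mathcal{D}_{\text{sem},0}$, consistent with the paper's standing assumption, and I will state that explicitly. Since the inequality holds for every feasible attack achieving the target distortion, it holds in particular for the minimizer $\rho^*_{\text{sem},1}$ of \eqref{eq:problem}.

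The main obstacle is the rigor of the linearization. The Taylor approximation \eqref{eq:taylor} is only approximate, so the lemma is exact only under that model. To make it rigorous I would instead argue directly from Lipschitz continuity:
\[
\|g(\bm{y})-\bm{x}\|\le\|g(\bm{r})-\bm{x}\|+G\|\bm{s}\|,
\]
by the triangle inequality. Squaring and taking expectations with the same Cauchy--Schwarz step gives the identical bound with no linearization at all. I would present this as the primary proof and mention the Jacobian route as the heuristic that motivated it.
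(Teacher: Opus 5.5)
Your proposal is correct, and your primary argument takes a genuinely different and tighter route than the paper. The paper stays inside the first-order model \eqref{eq:taylor}. It writes $D_{\text{sem},1}\approx D_{\text{sem},0}+B+C$ and bounds $C^2\le 4BD_{\text{sem},0}$ by a trace inequality. It then uses the per-sample assumption $D_{\text{sem},1}\ge D_{\text{sem},0}$ to extract the bracket \eqref{e:range_B} and applies Jensen to reach $2\sqrt{\mathbb{E}[BD_{\text{sem},0}]}$. Finally, in step (c) of \eqref{e:rho_sem_inequality}, it pulls $\|\bm{s}\|^2$ out of that expectation by treating it as the deterministic constant $\rho_{\text{sem},1}$. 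Your route starts from $\|g(\bm{y};\bm{W}_g)-\bm{x}\|\le\sqrt{D_{\text{sem},0}}+G\|\bm{s}\|$ and applies Cauchy--Schwarz in expectation, which amounts to Minkowski's inequality in $L^2$. It yields the same bound $\mathcal{D}_{\text{sem},1}\le\bigl(G\sqrt{\rho_{\text{sem},1}}+\sqrt{\mathcal{D}_{\text{sem},0}}\bigr)^2$, but exactly rather than approximately. It needs no differentiability, no small-perturbation assumption and no per-sample monotonicity, and it uses only $\mathbb{E}[\|\bm{s}\|^2]=\rho_{\text{sem},1}$. That last point matters. It matches the paper's own definition of attack power in Section~\ref{sec:III}, and it survives sample-specific attacks whose $\|\bm{s}\|$ varies with $\bm{r}$. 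The paper's step (c), by contrast, silently needs $\|\bm{s}\|^2$ to be constant, or at least uncorrelated with $D_{\text{sem},0}$. What the paper's route buys is continuity with the Jacobian machinery it reuses in Lemma~\ref{lem:D_sem_noise} and the PGA analysis, plus the upper half of the bracket on $\sqrt{B}$, which is not needed here. Two small corrections. First, the decoupled bound $\mathbb{E}[\sqrt{BD_{\text{sem},0}}]\le\sqrt{\mathbb{E}[B]\,\mathbb{E}[D_{\text{sem},0}]}$ is Cauchy--Schwarz, not Jensen; Jensen alone gives only the paper's coupled form. Second, your case split is needed only under the threshold reading $\mathcal{D}\ge\mathcal{D}^*$ of \eqref{eq:problem}. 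If $\mathcal{D}_{\text{sem},1}$ denotes the distortion the attack actually produces, swap the roles of $\bm{y}$ and $\bm{r}$ in the same Minkowski argument to get $\sqrt{\mathcal{D}_{\text{sem},0}}\le\sqrt{\mathcal{D}_{\text{sem},1}}+G\sqrt{\rho_{\text{sem},1}}$. Then $\bigl|\sqrt{\mathcal{D}_{\text{sem},1}}-\sqrt{\mathcal{D}_{\text{sem},0}}\bigr|\le G\sqrt{\rho_{\text{sem},1}}$, and the lemma holds with no restriction.
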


To enable a fair comparison across different system architectures, we will later fix the target distortion to a common value, i.e., $\mathcal{D}_{\text{sem},1}=\mathcal{D}^*$. On the other hand, an explicit characterization of $\mathcal{D}_{\text{sem},0}$, which captures the system's intrinsic resilience to channel noise, is necessary to complete the bound and to quantify the intrinsic resilience of semantic communication.

\begin{lem}
\label{lem:D_sem_noise}
For a semantic communication system operating without adversarial perturbations, the average distortion can be expressed as
\begin{eqnarray}\label{e:D_sem_noise}
\mathcal{D}_{\text{sem},0} \hspace{-0.2cm}&=&\hspace{-0.2cm}
\sigma^2_\omega \mathbb{E}_{\bm{z}}\left[  \sum_i  \sigma_{\bm{\mathcal{J}}_{ \! g\!} ,i}^2  \left( \left|h \right|  \bm{z} \right) \right] 
\leq N \sigma^2_\omega G^2,
\end{eqnarray}
where $\{\sigma_{\bm{\mathcal{J}}_{ \! g\!} ,i}(\cdot)\}_i$ denote the singular values of the Jacobian matrix.
\end{lem}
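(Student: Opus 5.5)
The plan is to mirror the first-order Taylor analysis behind \eqref{eq:taylor}, now with the channel noise $\bm{\omega}$ playing the role of the perturbation instead of $\bm{s}$. Write $\bm{r} = |h|\bm{z} + \bm{\omega}$ and expand the decoder around the noiseless received point $|h|\bm{z}$:
\begin{equation*}
g(\bm{r};\bm{W}_g) \approx g(|h|\bm{z};\bm{W}_g) + \bm{\mathcal{J}}_g(|h|\bm{z})\,\bm{\omega}.
\end{equation*}
The key modeling step is to treat the noiseless decoder output as the reconstruction that the trained DeepJSCC attains in the noise-free regime. Concretely, I will adopt $g(|h|\bm{z};\bm{W}_g)\approx\bm{x}$, i.e., the residual distortion at $\bm{\omega}=\bm{0}$ is neglected compared with the noise-induced term. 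This is the standard idealization consistent with the ``$=$'' in \eqref{e:D_sem_noise}. Under it, $D_{\text{sem},0}\approx \|\bm{\mathcal{J}}_g(|h|\bm{z})\bm{\omega}\|^2$.

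Next, take the expectation over $\bm{\omega}$ conditioned on $\bm{z}$. Since $\bm{\omega}\sim\mathcal{N}(0,\sigma_\omega^2\bm{I}_N)$ is independent of $\bm{z}$, we get
\begin{equation*}
\mathbb{E}_{\bm{\omega}}\big[\bm{\omega}^\top\bm{\mathcal{J}}_g^\top\bm{\mathcal{J}}_g\bm{\omega}\big]=\sigma_\omega^2\,\mathrm{tr}\{\bm{\mathcal{J}}_g^\top\bm{\mathcal{J}}_g\}=\sigma_\omega^2\|\bm{\mathcal{J}}_g\|_F^2=\sigma_\omega^2\sum_i\sigma^2_{\bm{\mathcal{J}}_g,i}(|h|\bm{z}).
\end{equation*}
If the residual $\bm{e}=g(|h|\bm{z})-\bm{x}$ were kept, the cross term $2\bm{e}^\top\bm{\mathcal{J}}_g\bm{\omega}$ would vanish in expectation because $\bm{\omega}$ is zero-mean. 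Only an additive $\mathbb{E}\|\bm{e}\|^2$ would remain, and it is dropped under the stated idealization. Averaging over $\bm{z}$ (equivalently over the dataset $\Omega$ through the encoder) then gives the equality in \eqref{e:D_sem_noise}.

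For the inequality, I will use $G$-Lipschitz continuity of the decoder. It implies $\|\bm{\mathcal{J}}_g(\bm{r})\bm{v}\|\le G\|\bm{v}\|$ for all $\bm{v}$, so every singular value satisfies $\sigma_{\bm{\mathcal{J}}_g,i}\le G$. There are at most $\min(M,N)\le N$ nonzero singular values, hence $\sum_i\sigma_{\bm{\mathcal{J}}_g,i}^2\le NG^2$ pointwise in $\bm{z}$. Taking expectation preserves this bound.

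The main obstacle is justifying the approximations rather than the algebra. One issue is that the first-order expansion for $\bm{\omega}$ is accurate only when $\sigma_\omega^2$ is small relative to the decoder's curvature scale. The other is that the noiseless residual $\bm{e}$ must be negligible. I would state both explicitly as the assumptions under which the equality holds. I would also remark that without the idealization the equality becomes ``$\mathcal{D}_{\text{sem},0}\approx\mathbb{E}\|\bm{e}\|^2+\sigma_\omega^2\mathbb{E}[\sum_i\sigma_i^2]$,'' so the trace term isolates the noise-driven component.
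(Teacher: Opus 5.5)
Your proposal is correct and follows essentially the same route as the paper's proof in Appendix~\ref{App:A}: a first-order expansion of $g$ around $|h|\bm{z}$, the cross term vanishing because $\mathbb{E}[\bm{\omega}]=\bm{0}$, the trace identity $\mathbb{E}_{\bm{\omega}}[\|\bm{\mathcal{J}}_g\bm{\omega}\|^2]=\sigma_\omega^2\sum_i\sigma_{\bm{\mathcal{J}}_g,i}^2$, and the Lipschitz bound $\sigma_{\bm{\mathcal{J}}_g,i}\le G$. The only difference is that you state the idealization $g(|h|\bm{z};\bm{W}_g)\approx\bm{x}$ explicitly, whereas the paper drops the residual term $\mathbb{E}_{\bm{z}}\|g(|h|\bm{z};\bm{W}_g)-\bm{x}\|^2$ without comment at the end of step (d), so your version is clearer about the conditions under which the stated equality holds.
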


\begin{proof}
See Appendix \ref{App:A}.
\end{proof}

Combining Lemmas~\ref{lem:bound1_rho_sem} and \ref{lem:D_sem_noise} yields an explicit lower bound on the attack power required for a semantic communication system to reach a prescribed target distortion $\mathcal{D}^*$.

\begin{thm}\label{thm:rho_sem_bound}
Consider a semantic communication system subject to adversarial perturbations. The minimum attack power $\rho^*_{\text{sem},1}$ needed to achieve a target distortion $\mathcal{D}^*$ is lower bounded by
\begin{equation}\label{e:sem_bound}
\rho^*_{\text{sem},1}  \geq  
\left( 
\frac{\sqrt{\mathcal{D}^*}}{G }
 -  
\sqrt{  N \sigma_\omega^2}  
\right)^2.
\end{equation}
\end{thm}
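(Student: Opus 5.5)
The plan is to combine Lemma~\ref{lem:bound1_rho_sem} with Lemma~\ref{lem:D_sem_noise}. First, set the target distortion to $\mathcal{D}_{\text{sem},1}=\mathcal{D}^*$. Lemma~\ref{lem:bound1_rho_sem} holds for every attack attaining this distortion, so it holds in particular for the minimizer of \eqref{eq:problem}. Taking square roots, this gives
\begin{equation*}
\sqrt{\rho^*_{\text{sem},1}} \;\geq\; \frac{\sqrt{\mathcal{D}^*}-\sqrt{\mathcal{D}_{\text{sem},0}}}{G}.
\end{equation*}
Here I use the nonnegativity $\sqrt{\mathcal{D}^*}\geq\sqrt{\mathcal{D}_{\text{sem},0}}$, which follows from the standing assumption that an attack never reduces distortion. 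This is the form already implicit in \eqref{e:rho_sem_inequality}, before it was squared to state the lemma.

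Second, substitute the upper bound $\mathcal{D}_{\text{sem},0}\leq N\sigma_\omega^2G^2$ from Lemma~\ref{lem:D_sem_noise}. This gives $\sqrt{\mathcal{D}_{\text{sem},0}}/G\leq\sqrt{N\sigma_\omega^2}$. The right-hand side above is decreasing in $\mathcal{D}_{\text{sem},0}$, so replacing $\mathcal{D}_{\text{sem},0}$ by its upper bound only weakens the inequality, and
\begin{equation*}
\sqrt{\rho^*_{\text{sem},1}} \;\geq\; \frac{\sqrt{\mathcal{D}^*}}{G}-\sqrt{N\sigma_\omega^2}.
\end{equation*}

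The main obstacle is squaring this last inequality to reach \eqref{e:sem_bound}. Squaring preserves the inequality only when the right-hand side is nonnegative, i.e.\ when $\mathcal{D}^*\geq N\sigma_\omega^2G^2$. In that regime the claim follows immediately. If instead $\sqrt{\mathcal{D}^*}/G<\sqrt{N\sigma_\omega^2}$, the substituted bound is negative and carries no information. I would handle this case by restricting the theorem to the meaningful regime $\mathcal{D}^*\geq N\sigma_\omega^2G^2$, which is the one relevant for attacks that push distortion beyond the noise floor. Equivalently, the bound can be read as $\rho^*_{\text{sem},1}\geq\bigl(\max\{0,\sqrt{\mathcal{D}^*}/G-\sqrt{N\sigma_\omega^2}\}\bigr)^2$. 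I will state this caveat explicitly rather than claim \eqref{e:sem_bound} unconditionally, because when the inner expression is negative its square can exceed the true minimum power, and \eqref{e:sem_bound} as written would then be false.
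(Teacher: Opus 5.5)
Your argument is correct and follows the paper's route. The paper also sets $\mathcal{D}_{\text{sem},1}=\mathcal{D}^*$ in Lemma~\ref{lem:bound1_rho_sem} and substitutes $\mathcal{D}_{\text{sem},0}\le N\sigma_\omega^2G^2$ from Lemma~\ref{lem:D_sem_noise}, without further justification.

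Your caveat about squaring is also right, and the paper misses it. The paper states \eqref{e:sem_bound} unconditionally, and its discussion after Theorem~\ref{thm:condition} treats the squared expression as valid even where $\sqrt{\mathcal{D}^*}/G<\sqrt{N\sigma_\omega^2}$. In that regime the bound can genuinely fail: for example, $\mathcal{D}^*=\mathcal{D}_{\text{sem},0}<N\sigma_\omega^2G^2$ gives $\rho^*_{\text{sem},1}=0$ but a strictly positive right-hand side. So your $\max\{0,\cdot\}$ version is the correct statement.
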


Lemma~\ref{lem:D_sem_noise} and Theorem~\ref{thm:rho_sem_bound} together reveal a fundamental connection between the operational design of semantic communication and its resilience to adversarial attacks.

Lemma~\ref{lem:D_sem_noise} shows that, even in the absence of an attacker, the end-to-end distortion $\mathcal{D}_{\text{sem},0}$ is bounded by $N\sigma_\omega^2 G^2$. This implies that to achieve low distortion under channel noise, the decoder must be trained to have a small Lipschitz constant $G$. In other words, the very process of optimizing for noisy channels in DeepJSCC implicitly regularizes the decoder, forcing it to be smooth and insensitive to small input perturbations.

Theorem~\ref{thm:rho_sem_bound} then quantifies how this smoothness translates into adversarial robustness. A smaller $G$ makes the right-hand side of \eqref{e:sem_bound} larger, meaning an adversary must expend more power to achieve the same level of distortion. 
Consequently, the same architectural property that enables efficient communication over noisy channels directly contributes to an inherent robustness against adversarial perturbations.

\begin{rem}
The above insight challenges the prevailing belief that semantic communication, because it relies on DL, is inevitably fragile. On the contrary, the implicit regularization arising from the DeepJSCC training endows the system with a form of built-in resilience. Thus, semantic communication may not be more vulnerable than classical systems; instead, its design for noisy channels naturally promotes a degree of adversarial robustness that has been overlooked in prior work.
\end{rem}

\subsection{Attack Power Analysis for Classical Systems}
Section~\ref{sec:IVA} established a lower bound on the attack power required to achieve a target distortion in semantic communication systems. We now turn our attention to classical SSCC systems, with the goal of deriving an upper bound on the minimum attack power needed to induce a given distortion.

A fundamental distinction between the two paradigms lies in their design philosophies. In semantic communication, the DeepJSCC encoder and decoder are jointly optimized in an end-to-end manner to directly minimize the distortion. In contrast, classical SSCC systems are designed based on a separation principle: channel coding is designed to maximize the reliable transmission rate over a noisy channel, while source coding independently minimizes the source distortion given the rate provided by the channel code.

Directly analyzing the distortion and the corresponding minimum attack power $\rho^*_{\text{sscc},1}$ for a general adversarial perturbation is highly challenging. To make the problem tractable, we consider a specific, yet insightful, attack strategy: the adversary injects an AWGN attack with power $\rho_{\text{sscc},a}$. We denote this specific attack with the subscript ``$a$'' to distinguish it from the general attack denoted by ``$1$''. The key insight is that any attack that is more effective than AWGN would require less power to achieve the same distortion. This leads to the following lemma.

\begin{lem}\label{lem:AWGNattack}
The minimum attack power required to achieve the target distortion $\mathcal{D}^*$ is no greater than that required by an AWGN attack, that is,
\begin{equation}
    \rho^*_{\text{sscc},1} \leq \rho^*_{\text{sscc},a}.
\end{equation}  
\end{lem}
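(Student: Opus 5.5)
The plan is a feasibility argument applied to the optimization problem \eqref{eq:problem}. Write $\mathcal{S}$ for the set of all admissible perturbation strategies $\bm{s}$, which may be random and may depend on the observed $\bm{r}$. Write $\mathcal{S}_a \subseteq \mathcal{S}$ for the subfamily of AWGN attacks, namely $\bm{s} \sim \mathcal{N}(0, (\rho/N)\bm{I}_N)$ drawn independently of $\bm{r}$, indexed by their power $\rho$. Then $\rho^*_{\text{sscc},1}$ is the minimum of $\rho$ over $\mathcal{S}$ subject to $\mathcal{D}_{\text{sscc},1} \geq \mathcal{D}^*$. Likewise, $\rho^*_{\text{sscc},a}$ is the same minimum taken only over $\mathcal{S}_a$. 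Since the feasible set of the second problem is contained in that of the first, its minimum can be no smaller, which is exactly the claim.

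I would carry this out in three steps.

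\textbf{Step 1: membership.} I would check that an AWGN perturbation satisfies every requirement of the general adversarial model. It is a perturbation added to $\bm{r}$ with power $\mathbb{E}[\|\bm{s}\|^2] = \rho$. It is an oblivious strategy, meaning it simply ignores its knowledge of $\bm{r}$ and the code. A white-box adversary is allowed to ignore its knowledge, and is allowed to randomize, so AWGN is admissible.

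\textbf{Step 2: transfer of feasibility.} Let $\rho_a$ be any AWGN power that attains $\mathcal{D}_{\text{sscc},a} \geq \mathcal{D}^*$. By Step 1, the same perturbation is a general attack with power $\rho_a$ and with identical distortion, because the distortion depends only on the joint law of $(\bm{x}, \bm{r}, \bm{s})$. Hence $\rho_a$ is feasible for the general problem, so $\rho^*_{\text{sscc},1} \leq \rho_a$. Taking the infimum over all feasible $\rho_a$ gives the lemma.

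\textbf{Step 3: degenerate cases.} If no AWGN power reaches $\mathcal{D}^*$, then $\rho^*_{\text{sscc},a} = +\infty$ and the inequality holds trivially. If the minimum is not attained, I would work with infima throughout, and the argument goes through unchanged.

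The main obstacle is that the paper's power constraint is stated inconsistently. It appears both as an expectation, $\rho = \mathbb{E}\|\bm{s}\|^2$, and as a deterministic equality, $\|\bm{s}\|^2 = \rho$. A Gaussian vector meets the first form but not the second. I would fix the expected-power definition from \eqref{eq:problem}, under which Step 1 is immediate.

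If the deterministic per-sample constraint were insisted upon, I would instead use a spherically normalized noise $\bm{s} = \sqrt{\rho}\,\bm{u}/\|\bm{u}\|$ with $\bm{u}$ Gaussian. This has the same power and, by concentration of $\|\bm{u}\|^2/N$ for large $N$, essentially the same effect on the decoder. I would note this only as a remark rather than make it central.
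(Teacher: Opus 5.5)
Your proposal is correct and is essentially the paper's own argument. The paper justifies the lemma only by remarking that AWGN is one particular admissible attack, so the minimum power over all attacks cannot exceed that of an AWGN attack; your inclusion $\mathcal{S}_a\subseteq\mathcal{S}$ of feasible sets formalizes exactly this, and adopting the expected-power convention $\rho=\mathbb{E}[\|\bm{s}\|^2]$ from \eqref{eq:problem} is the right choice because it makes Gaussian noise admissible without any approximation.
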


This lemma allows us to upper-bound the SSCC system's vulnerability to the most effective attack by analyzing its response to a simpler, statistically characterized one. Consequently, we can focus our efforts on deriving an expression for $\rho^*_{\text{sscc},a}$, which serves as a conservative bound on the true minimum attack power.

The primary effect of an AWGN attack is to degrade the effective SNR at the receiver. Treating the attack as an additional source of noise, the channel capacity under this combined noise becomes
\begin{equation}\label{e:capacity}
\hspace{-0.2cm} \mathcal{C}  \!=\!   \frac{N}{2}  \log  \left(  1  \!+\!  \frac{|h|^2}{\rho_\omega  \!+\!  \rho^*_{\text{sscc},a}}  \right)  
 \!=\!   \frac{N}{2}  \log  \left(  1  \!+\!   \frac{ \eta N \sigma^2_{ \omega} }{N  \sigma_{ \omega}^2  \!+\!  \rho^*_{\text{sscc},a}}  \right),
\end{equation}
where we have expressed the result in terms of the nominal SNR $\eta$ for a channel without attack.

From an information-theoretic perspective, for a given target distortion $\mathcal{D}$, Shannon's rate-distortion theory dictates the minimum source coding rate $\mathcal{R}(\mathcal{D})$ required to represent the source. For a source $\bm{x} \in \mathbb{R}^M$ with MSE distortion, this rate-distortion function is lower-bounded by
\begin{eqnarray}\label{e:rate_distortion}
&&\hspace{-1.2cm}\mathcal{R}(\mathcal{D}) = \min_{p(\hat{\bm{x}}|\bm{x}):\, \mathbb{E} \left[\|\bm{x}-\hat{\bm{x}}\|^2\right]\le \mathcal{D}} \mathcal{I}(\bm{x};\hat{\bm{x}}) = \mathcal{H}(\bm{x}) - \mathcal{H}(\bm{x}|\hat{\bm{x}}) \notag\\
&&\hspace{-0.8cm}\overset{(a)}{\geq} 
 \mathcal{H}(\bm{x})\! -\! \frac{M}{2} \log{\!\left( 2 \pi e \!\cdot\! \frac{1}{M} \mathcal{D}\! \right)} 
\!\overset{(b)}{=} \frac{M}{2} \log{\left(  \frac{M \Theta_x}{\mathcal{D}}   \right)},
\end{eqnarray}
where $\mathcal{I}$ and $\mathcal{H}$ denote mutual information and differential entropy, respectively.
The inequality in (a) follows from the fact that, for a given MSE, the conditional differential entropy $\mathcal{H}(\bm{x}|\hat{\bm{x}})$ is maximized when the reconstruction error $\bm{x}-\hat{\bm{x}}$ follows a Gaussian distribution. In (b), we introduce the entropy power $\Theta_x$, defined as $\Theta_x \triangleq \frac{1}{2\pi e}\exp \left(\frac{2}{M}\mathcal{H}(\bm{x})\right)$, which captures the fundamental compressibility of the source.

The source-channel separation theorem states that reliable communication with an average distortion $\mathcal{D}^*$ is achievable if and only if the required source coding rate does not exceed the channel capacity, i.e., $\mathcal{R}(\mathcal{D}^*) \leq \mathcal{C}$. Substituting the expressions from \eqref{e:rate_distortion} and \eqref{e:capacity} into this condition yields the fundamental inequality that must be satisfied for the system to operate at distortion $\mathcal{D}^*$ under the AWGN attack:
\begin{equation} \label{e:condition_derivation}
\hspace{-0.3cm} \frac{M}{2} \log{\left(  \frac{M \Theta_x}{\mathcal{D}^*}   \right)} \le \frac{N}{2}  \log  \left(  1  +   \frac{ \eta N \sigma^2_{ \omega} }{N  \sigma_{ \omega}^2  +  \rho^*_{\text{sscc},a}}  \right).
\end{equation}

Rearranging this inequality provides a lower bound on the achievable distortion, given an attack of power $\rho^*_{\text{sscc},a}$:
\begin{equation}\label{e:bound_D_G}
\mathcal{D}^* \ge \frac{M \Theta_x}{\left(  1  +    \eta \cdot \frac{ N  }{N    +  \rho^*_{\text{sscc},a}/\sigma^2_{ \omega}}  \right)^{\frac{N}{M}}} .
\end{equation}

Finally, by inverting this relationship and combining it with Lemma~\ref{lem:AWGNattack}, we obtain an explicit upper bound on the minimum attack power required to force the classical system to achieve a distortion no less than $\mathcal{D}^*$.

\begin{thm}\label{thm:rho_G}
Consider a classical SSCC system operating under an adversarial perturbation. 
The minimum attack power $\rho^*_{\text{sscc},1}$ required to achieve a target distortion $\mathcal{D}^*$ is upper bounded by
\begin{equation}\label{e:sscc_bound}
\rho^*_{\text{sscc},1} \le  \frac{\eta N\sigma_\omega^2} {\left(\frac{M \Theta_x}{\mathcal{D}^*} \right)^{M/N} -1} - N\sigma_\omega^2.
\end{equation}
\end{thm}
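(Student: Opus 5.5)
The plan is to invert the separation-theorem condition \eqref{e:condition_derivation} for the AWGN attack and then pass to the general attack through Lemma~\ref{lem:AWGNattack}. Concretely, I will first identify $\rho^*_{\text{sscc},a}$ as the smallest AWGN power at which the system can no longer guarantee distortion below $\mathcal{D}^*$. Equivalently, it is the power at which \eqref{e:condition_derivation} becomes tight. Since the right-hand side of \eqref{e:capacity} is strictly decreasing in the attack power, the achievable distortion bound \eqref{e:bound_D_G} is strictly increasing in it. So there is a unique threshold power, and it is obtained by setting \eqref{e:bound_D_G} to equality.

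The algebra then proceeds in a fixed order. Starting from equality in \eqref{e:condition_derivation}, I exponentiate both sides with the factor $2/N$ to obtain
\begin{equation*}
\left(\frac{M\Theta_x}{\mathcal{D}^*}\right)^{M/N} = 1 + \frac{\eta N\sigma_\omega^2}{N\sigma_\omega^2 + \rho^*_{\text{sscc},a}} .
\end{equation*}
Next I subtract $1$ and take reciprocals. This isolates $N\sigma_\omega^2+\rho^*_{\text{sscc},a}$ as $\eta N\sigma_\omega^2 / \bigl((M\Theta_x/\mathcal{D}^*)^{M/N}-1\bigr)$, and subtracting $N\sigma_\omega^2$ gives the right-hand side of \eqref{e:sscc_bound} as the exact value of $\rho^*_{\text{sscc},a}$. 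Applying Lemma~\ref{lem:AWGNattack}, i.e., $\rho^*_{\text{sscc},1}\le\rho^*_{\text{sscc},a}$, then yields \eqref{e:sscc_bound}.

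The main obstacle is justifying the direction of the inequalities. Two facts pull in different directions. First, \eqref{e:rate_distortion} is only a lower bound on $\mathcal{R}(\mathcal{D})$, which means that at the threshold power the true required rate already exceeds capacity. Second, the separation "only if" direction says that once $\mathcal{R}(\mathcal{D})>\mathcal{C}$, distortion below $\mathcal{D}$ is impossible. Combining these, I will argue as follows:
\begin{itemize}
\item Any AWGN power $\rho$ at or above the derived value forces $\mathcal{C}\le \frac{M}{2}\log(M\Theta_x/\mathcal{D}^*)\le \mathcal{R}(\mathcal{D}^*)$.
\item Hence the achieved distortion is at least $\mathcal{D}^*$.
\item Therefore the minimum AWGN power that attains $\mathcal{D}^*$ is no larger than the derived value.
\end{itemize}
This gives $\rho^*_{\text{sscc},a}\le$ RHS, which is exactly the direction we need. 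The boundary case of equality in $\mathcal{R}=\mathcal{C}$ requires a limiting or continuity remark.

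Finally, I will note the implicit regime assumptions for the bound to be meaningful:
\begin{itemize}
\item $(M\Theta_x/\mathcal{D}^*)^{M/N}>1$, i.e., $\mathcal{D}^*<M\Theta_x$, so the denominator is positive.
\item The target $\mathcal{D}^*$ exceeds the attack-free distortion, so the right-hand side is nonnegative.
\end{itemize}
Outside this regime the bound is trivial, either because $\rho^*=0$ already suffices or because the target is unattainable.
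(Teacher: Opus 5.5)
Your proposal is correct and follows the paper's own route: treat the AWGN attack as extra Gaussian noise in the capacity \eqref{e:capacity}, combine it with the Shannon lower bound \eqref{e:rate_distortion} through the separation condition \eqref{e:condition_derivation}, invert for the threshold power, and pass to general attacks via Lemma~\ref{lem:AWGNattack}; your inequality-direction argument (at that threshold $\mathcal{C}\le \frac{M}{2}\log(M\Theta_x/\mathcal{D}^*)\le\mathcal{R}(\mathcal{D}^*)$, so the converse forces distortion at least $\mathcal{D}^*$) is a cleaner justification than the paper's inline rearrangement. The one slip is in your second paragraph, where you call the right-hand side of \eqref{e:sscc_bound} the exact value of $\rho^*_{\text{sscc},a}$: because \eqref{e:rate_distortion} is only a lower bound on $\mathcal{R}(\mathcal{D}^*)$, it is in general only an upper bound on $\rho^*_{\text{sscc},a}$ (exact only when that bound is tight), and, as your third paragraph correctly shows, an upper bound is all the theorem needs.
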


The upper bound in \eqref{e:sscc_bound} offers several fundamental insights into the adversarial vulnerability of classical SSCC systems.
\begin{rem}
    The bound can be interpreted as how much additional equivalent noise power must be injected on top of the existing channel noise to push the system distortion to $\mathcal{D}^*$.
\end{rem}

\begin{rem}
    For a fixed channel use $N$ and source dimension $M$, a source with higher entropy power (i.e., one that is less compressible) leads to a larger $\Theta_x$, which increases the numerator $M\Theta_x/\mathcal{D}^*$ and consequently reduces the required attack power. In other words, sources that are inherently harder to compress are also more vulnerable to adversarial perturbations.
\end{rem}

\begin{rem}
    The bound also quantifies the vulnerability amplification effect under high-rate compression. When the system operates with a large bandwidth compression ratio, i.e., when $M/N$ is large, the denominator of the first term decreases exponentially due to the exponent $M/N$. This results in a rapid reduction in the attack power threshold $\rho^*_{\text{sscc},1}$. Thus, pursuing higher efficiency (greater compression) in classical systems exponentially magnifies their adversarial fragility.
\end{rem}

\subsection{Comparative Analysis}

The bounds in Theorems \ref{thm:rho_sem_bound} and \ref{thm:rho_G} lead to a sufficient condition under which the semantic communication system requires higher attack power to reach the same distortion than the classical SSCC system. 

\begin{thm}\label{thm:condition}
In an adversarial environment, the semantic communication system is more robust than the classical SSCC system, i.e., $\rho^*_{\text{sem},1} \geq \rho^*_{\text{sscc},1}$, for achieving the same distortion $\mathcal{D}^*$ if
\begin{eqnarray} \label{e:condition}
\left( 
\frac{\sqrt{\mathcal{D}^*}}{G }
 -  
\sqrt{  N \sigma_\omega^2}  
\right)^2 \geq  \left[\frac{\eta } {\left(\frac{M \Theta_x}{\mathcal{D}^*} \right)^{\frac{M}{N}} -1} - 1 \right]  N \sigma_\omega^2.
\end{eqnarray}
\end{thm}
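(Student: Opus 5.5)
The proof is a sandwich argument that chains the two previously established bounds through the hypothesis \eqref{e:condition}. Fix the common target distortion $\mathcal{D}^*$. Theorem~\ref{thm:rho_sem_bound} gives a lower bound on $\rho^*_{\text{sem},1}$. Theorem~\ref{thm:rho_G}, which rests on Lemma~\ref{lem:AWGNattack}, gives an upper bound on $\rho^*_{\text{sscc},1}$. If the lower bound for the semantic system is at least the upper bound for the classical system, transitivity yields $\rho^*_{\text{sem},1} \geq \rho^*_{\text{sscc},1}$.

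Concretely, the steps are as follows.

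\emph{Step 1.} Write $L_{\text{sem}} \triangleq \bigl(\sqrt{\mathcal{D}^*}/G - \sqrt{N\sigma_\omega^2}\bigr)^2$. Theorem~\ref{thm:rho_sem_bound} then gives $\rho^*_{\text{sem},1} \geq L_{\text{sem}}$.

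\emph{Step 2.} Rewrite the right-hand side of \eqref{e:sscc_bound} by factoring out $N\sigma_\omega^2$:
\begin{equation*}
U_{\text{sscc}} \triangleq \frac{\eta N\sigma_\omega^2}{\left(\frac{M\Theta_x}{\mathcal{D}^*}\right)^{M/N}-1} - N\sigma_\omega^2 = \left[\frac{\eta}{\left(\frac{M\Theta_x}{\mathcal{D}^*}\right)^{M/N}-1} - 1\right] N\sigma_\omega^2 .
\end{equation*}
This is exactly the right-hand side of \eqref{e:condition}, and Theorem~\ref{thm:rho_G} gives $\rho^*_{\text{sscc},1} \leq U_{\text{sscc}}$.

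\emph{Step 3.} The hypothesis \eqref{e:condition} states $L_{\text{sem}} \geq U_{\text{sscc}}$. Hence
\begin{equation*}
\rho^*_{\text{sem},1} \geq L_{\text{sem}} \geq U_{\text{sscc}} \geq \rho^*_{\text{sscc},1}.
\end{equation*}

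The algebra is trivial, so the real work is checking that the two earlier bounds apply in the same regime. I would verify three points.

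\emph{Same distortion target.} Both theorems must be invoked at the same $\mathcal{D}^*$ and under the same channel parameters $N$, $\sigma_\omega^2$, $\eta$, as the comparative framework of Section~\ref{sec:III} stipulates.

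\emph{Validity of Theorem~\ref{thm:rho_sem_bound}.} This bound requires $\sqrt{\mathcal{D}^*} \geq G\sqrt{N\sigma_\omega^2}$. The bound comes from Lemma~\ref{lem:bound1_rho_sem} combined with $\mathcal{D}_{\text{sem},0} \leq N\sigma_\omega^2 G^2$ from Lemma~\ref{lem:D_sem_noise}. Substituting this upper bound into $(\sqrt{\mathcal{D}^*}-\sqrt{\mathcal{D}_{\text{sem},0}})^2$ preserves the direction of the inequality only when $\sqrt{\mathcal{D}^*} \geq \sqrt{N\sigma_\omega^2}\,G$. In the complementary case the squared expression can be positive while the true bound is vacuous, so I would state this regime explicitly. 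It is natural, since the target distortion under attack exceeds the noise-only distortion.

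\emph{Validity of Theorem~\ref{thm:rho_G}.} This bound requires $(M\Theta_x/\mathcal{D}^*)^{M/N} > 1$, i.e.\ $\mathcal{D}^* < M\Theta_x$, so that its denominator is positive.

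Once these regime conditions are recorded, the chain of inequalities in Step~3 completes the proof.
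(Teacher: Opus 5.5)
This is exactly the paper's argument: Theorem~\ref{thm:condition} follows by chaining the lower bound of Theorem~\ref{thm:rho_sem_bound} and the upper bound of Theorem~\ref{thm:rho_G} through \eqref{e:condition}, whose right-hand side is \eqref{e:sscc_bound} with $N\sigma_\omega^2$ factored out, and your regime caveats are correct refinements that the paper leaves implicit. Two small fixes: first, $\sqrt{\mathcal{D}^*}\geq G\sqrt{N\sigma_\omega^2}$ is not implied by $\mathcal{D}^*\geq\mathcal{D}_{\text{sem},0}$ (Lemma~\ref{lem:D_sem_noise} only bounds $\mathcal{D}_{\text{sem},0}$ from above by $N\sigma_\omega^2G^2$), so it must be imposed as a hypothesis rather than called natural; second, where the right-hand side of \eqref{e:condition} is negative, the link $U_{\text{sscc}}\geq\rho^*_{\text{sscc},1}$ fails literally, though there $\rho^*_{\text{sscc},1}=0$ and the conclusion is immediate.
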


\begin{figure}
    \centering
    \includegraphics[width=0.75\linewidth]{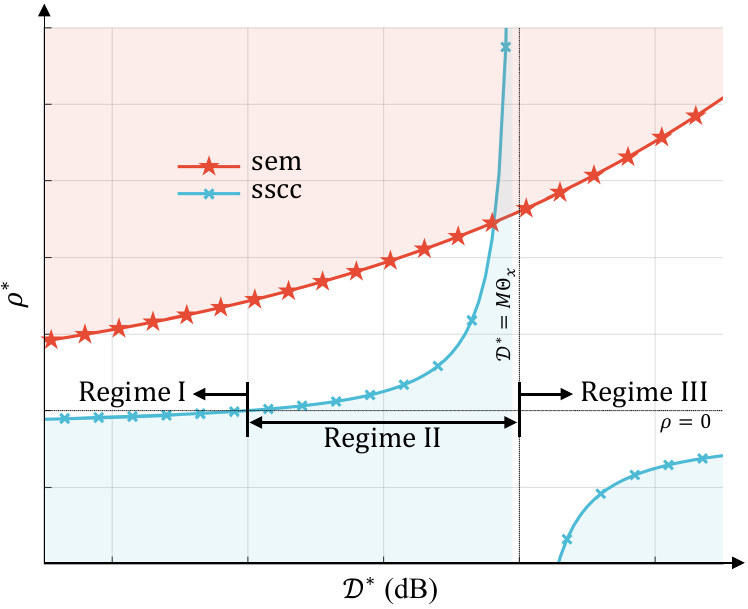}
    \caption{An illustration of the sufficient condition in \eqref{e:condition}.}
    \label{fig:condition}
\end{figure}

To gain insight into when this condition can be satisfied, we analyze the behavior of the left-hand side (LHS) and right-hand side (RHS) as functions of the target distortion $\mathcal{D}^*$, as illustrated in Fig.~\ref{fig:condition}.

The LHS is a squared term, hence always non-negative. For $\mathcal{D}^*$ large enough that $\sqrt{\mathcal{D}^*}/G > \sqrt{N\sigma_\omega^2}$, it grows approximately linearly with $\mathcal{D}^*$; for very small $\mathcal{D}^*$, it may decrease initially but eventually increases with $\mathcal{D}^*$. The RHS, in contrast, has a more intricate structure determined by the source entropy power $\Theta_x$, the bandwidth ratio $N/M$, and the nominal SNR $\eta$. Its behavior can be divided into three distinct regimes based on the value of $\mathcal{D}^*$ relative to the source entropy power $M\Theta_x$.

\begin{itemize}[leftmargin=0.5cm]
\item \textit{Regime I}: $0 < \mathcal{D}^* < M\Theta_x(\eta+1)^{-N/M}$.
Here the RHS is strictly negative, thus the inequality \eqref{e:condition} holds automatically in this regime. We point out that a negative upper bound of $\rho^*_{\text{sscc},1}$ is physically meaningless. It merely reflects that, according to the bound, even zero attack power would already force the classical SSCC system to exceed $\mathcal{D}^*$. In this regime, the classical system is inherently incapable of achieving the target distortion $\mathcal{D}^*$.
\item \textit{Regime II}: $M\Theta_x(\eta+1)^{-N/M} \leq \mathcal{D}^* < M\Theta_x$.
In this interval the RHS is non-negative and increases monotonically with $\mathcal{D}^*$, becoming arbitrarily large as $\mathcal{D}^*$ approaches $M\Theta_x$ from below. The LHS also increases with $\mathcal{D}^*$, but remains finite. Therefore, the inequality \eqref{e:condition} is not automatically satisfied; it holds only when $\mathcal{D}^*$ exceeds a certain threshold where the LHS curve crosses the RHS curve (see Fig.~\ref{fig:condition}). The location of this threshold depends on all system parameters: a smaller Lipschitz constant $G$ raises the LHS, making the condition easier to satisfy for a wider range of $\mathcal{D}^*$. Conversely, a higher SNR $\eta$ or a larger bandwidth ratio $N/M$ steepens the RHS curve near $M\Theta_x$, potentially pushing the threshold to the right.
\item \textit{Regime III}: $\mathcal{D}^* > M\Theta_x$.
Here the denominator $\bigl( \frac{M\Theta_x}{\mathcal{D}^*} \bigr)^{M/N} - 1$ becomes negative, causing the entire RHS to be negative. As in Regime~I, a negative RHS implies that the classical system cannot achieve the target distortion $\mathcal{D}^*$ (since even with no attack the required rate would exceed channel capacity). Hence the comparison again reduces to a scenario where the semantic system is viable while the classical one is not.
\end{itemize}

The above analysis demonstrates that the feasible region characterized by \eqref{e:condition} is nonempty and, in fact, spans a broad range of distortion levels. This observation substantiates that semantic communication are not inherently more fragile than SSCC systems. On the contrary, across a substantial distortion regime, they can achieve superior robustness against adversarial perturbations.

\begin{rem}
    It is worth noting that the bounds employed in Theorems~\ref{thm:rho_sem_bound} and~\ref{thm:rho_G} are conservative. The actual attack power needed to achieve a given distortion may be lower for the classical system (the bound is an upper bound) and higher for the semantic system (the bound is a lower bound). Consequently, the sufficient condition \eqref{e:condition} likely underestimates the true robustness of semantic communication. In practice, with non‑Gaussian error distributions and decoder Jacobians that rarely attain the worst‑case Lipschitz constant, semantic systems may exhibit superior robustness even for distortion values that lie outside the region identified here.
\end{rem}

\section{Attack Methodology}\label{sec:V}
Section~\ref{sec:IV} provided a theoretical comparison of adversarial robustness, suggesting a counterintuitive result that semantic communication systems can exhibit stronger adversarial robustness than classical systems. Yet, the analysis is information-theoretic and relied on bounding arguments. 

To validate these theoretical findings empirically, we must directly compare the minimum attack power required to induce a target distortion in each system. However, existing attack methodologies from the literature are ill-suited for this task. 
For classical systems, prior work on intelligent jamming has largely been confined to information-theoretic analyses under arbitrarily varying channel models, offering no practical attack strategies that exploit the structural vulnerabilities of specific code families like LDPC. Moreover, the discrete, non-differentiable nature of coded modulation and iterative decoding renders standard gradient-based adversarial attacks inapplicable. 
For semantic systems, while DeepJSCC is differentiable, existing attacks, such as FGSM or PGD, are designed for classification tasks and aim to cause misclassification with fixed perturbation budgets, not to find the minimal power required to achieve a target distortion in a regression setting. 

To address these gaps, this section develops two purpose-built attack methodologies for classical and semantic communication systems, respectively. These attack strategies enable a rigorous empirical comparison, allowing us to directly assess the practical robustness of each architecture and provide the first controlled experimental validation of semantic communication's adversarial resilience.

\subsection{Vulnerable Set Attack for Classical Systems}\label{sec:VA}
The robustness of classical communication systems fundamentally relies on error correction codes. Decades of coding theory have focused almost exclusively on protecting against random noise, such as AWGN, while the behavior of channel codes under intelligently crafted, targeted perturbations remains almost entirely unexplored. This section fills this gap by designing attacks that exploit both the structural vulnerabilities of classical codes and the instantaneous channel state to induce decoding failure with minimal energy. In particular, we focus on the LDPC code, a widely-used channel code renowned for its near-capacity performance under random noise.

An LDPC code is defined by a sparse parity-check matrix $\bm{H} \in \{0,1\}^{m \times n}$. It is conveniently represented as a Tanner graph $\mathcal{G} = (\{v\}, \{c\}, \{\varepsilon\})$, where $\{v\}$, $\{c\}$, and $\{\varepsilon\}$ denote the sets of variable nodes (VNs, corresponding to coded bits), check nodes (CNs, corresponding to parity checks), and edges, respectively \cite{gallager2003low}. An edge between $v_i$ and $c_j$ exists if and only if $\bm{H}_{ij}=1$.
Decoding is performed via the belief propagation (BP) algorithm \cite{ryan2009channel,richardson2002design}. The decoder first computes a log-likelihood ratio (LLR) $L_i^{(0)}$ for each VN $v_i$ from the received signal. It then iteratively exchanges messages along the edges of the Tanner graph. If at any iteration the hard decisions based on the current LLRs satisfy all parity checks, decoding stops successfully; otherwise, it continues until a maximum number of iterations is reached, at which point a decoding failure is declared.
In the following, we will exploit the structured vulnerabilities of such an iterative, graph-based structure to design targeted perturbations. 

\subsubsection{Vulnerable set}
To induce decoding failure with minimal energy, an adversary should concentrate its power on the VNs most likely to be responsible for such failures. This raises a central question: how can we identify, for a given LDPC code, which nodes are the most attractive targets?

The coding theory literature offers extensive insights into why BP decoding fails. Common culprits include short cycles in the Tanner graph, particularly 4-cycles, which introduce correlations between messages and hinder convergence; high-degree nodes, whose errors can propagate widely; and local graph structures that are only weakly constrained by the parity-check equations. These observations suggest that vulnerability is not uniformly distributed across nodes; rather, certain nodes are structurally predisposed to play a role in decoding failures.

Inspired by these insights, we propose to pre-identify a \emph{vulnerable set} $\mathcal{T} \subseteq \mathcal{V}$. This set comprises VNs that, based solely on the static code structure, are deemed most likely to be implicated in decoding failures. The adversary does not restrict its attack exclusively to these nodes; rather, it uses this set to guide a weighted allocation of the attack budget. By assigning higher attack weights to nodes in $\mathcal{T}$, the adversary can focus its energy where it is structurally most effective. Importantly, because $\mathcal{T}$ depends only on the parity-check matrix $\bm{H}$, it can be computed offline once per code and reused for all received signals. The online attack then dynamically allocates power across all nodes, but with a bias towards those in the vulnerable set, based on real-time LLRs. 

\subsubsection{Identifying the vulnerable set}
To construct the vulnerable set, we assign each VN $v_i$ a static vulnerability score $\varphi_i$ that combines four complementary features. Each feature is motivated by known causes of BP decoding difficulty and is computed exclusively from $\bm{H}$.

\textit{Node degree}.
A simple but essential indicator is the VN's degree $|\mathcal{N}_c(i)|$. High-degree nodes participate in many parity checks, so errors there have multiple opportunities to disrupt parity satisfactions. We include degree as a baseline structural feature:
\begin{equation}\label{eq:degree_feature}
\varphi_{\mathrm{deg}}(i) = |\mathcal{N}_c(i)|.
\end{equation}

\textit{Two-hop reachability}.
A node's influence extends beyond its immediate neighbors. If a node can affect many others within two hops (through a shared check node), corrupting it may have widespread consequences. Let $\mathcal{N}_c(i)$ be the set of CNs adjacent to VN $i$, and for a CN $c$, let $\mathcal{N}_v(c)$ be its adjacent VNs. The two-hop reachable set for node $i$ is
\begin{equation}
 \mathcal{R}_2(i)=\Bigl(\bigcup_{c\in \mathcal{N}_c(i)} \mathcal{N}_v(c)\Bigr)\setminus \{i\},   
\end{equation}
and the reachability feature is its cardinality:
\begin{equation}\label{eq:2hop_feature}
\varphi_{\mathrm{hop}}(i) = |\mathcal{R}_2(i)|.
\end{equation}
Nodes with large two-hop reach can propagate errors widely, making them attractive targets.

\textit{Four-cycle participation}.
Short cycles, especially 4-cycles, degrade BP performance by creating early correlations that mislead message updates. As shown in Fig.~\ref{fig:LDPC}, a VN that participates in many 4-cycles is embedded in a local structure that hinders convergence, making it more likely to be involved in a decoding failure. For VNs $i$ and $j$, let $K_{ij}$ be the number of check nodes they share. Then the number of 4-cycles involving the pair $(i,j)$ is $\binom{K_{ij}}{2}$. Summing over all $j$ gives the total 4-cycle involvement of node $i$:
\begin{equation}\label{eq:4cycle_feature}
\varphi_{\mathrm{cyc}}(i) = \sum_{j=1}^{n} \binom{K_{ij}}{2}.
\end{equation}

\begin{figure}[t!]
\centering
\includegraphics[width=0.95\columnwidth]{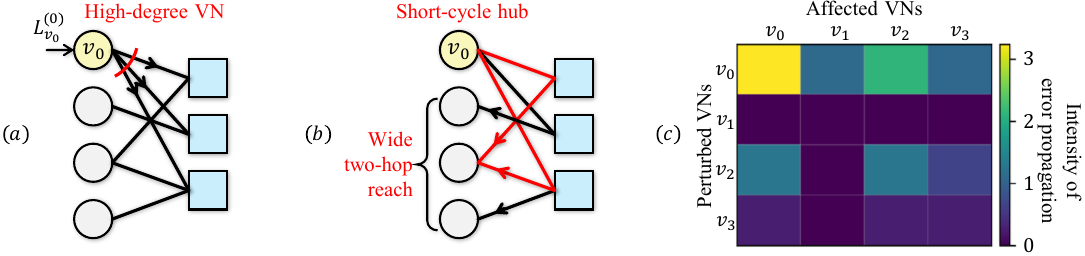}
\vspace{-0cm}
\caption{An illustration of the structural vulnerabilities of LDPC. The VN $v_0$ (a) has a high degree, and (b) is a short-cycle hub and has wide two-hop reach. The error propagation heatmap is shown in (c), where color intensity reflects how strongly a perturbation injected at each VN propagates to others. The resulting vulnerability order is $v_0>v_2>v_3>v_1$.
}
\label{fig:LDPC}
\end{figure}

\textit{Weak constraint participation}.
Classical trapping-set analysis has shown that certain linear combinations of VNs are only weakly constrained by the parity-check equations; these directions correspond to small singular values of $\bm{H}$ when viewed as a real-valued matrix. Although $\bm{H}$ is binary, its singular value decomposition (SVD) $\bm{H} = \bm{U}\bm{\Sigma}\bm{V}^{\top}$ reveals the real-valued subspaces where the parity checks offer little constraint. In particular, a small singular value $\sigma$ indicates that there exists a direction in the VN space (a right singular vector) that is mapped to a very small vector by $\bm{H}$. VNs with large coefficients in that direction are therefore part of a weakly constrained combination, making them structurally vulnerable.

Concretely, let $\sigma_1 \ge \sigma_2 \ge \cdots \ge \sigma_r > 0$ be the nonzero singular values, and let $\bm{V}$ be the corresponding right singular vectors. We focus on the $k$ smallest singular values, where $k$ is chosen at the elbow of the singular-value curve. For each VN $j$, we define
\begin{equation}\label{eq:svd_feature}
\varphi_{\mathrm{svd}}(j) = \sum_{i=1}^{k} \frac{|\bm{V}_{j,r-i+1}|}{\sigma_{r-i+1} + \varepsilon},
\end{equation}
where $\varepsilon>0$ prevents division by zero. A high score indicates strong participation in weakly constrained directions.

All four features are computed from $\bm{H}$ alone, enabling offline precomputation. Once computed, each feature is normalized to $[0,1]$ via min-max scaling. For example, $\varphi_{\mathrm{svd}}(i)$ is normalized as
\begin{equation}
\widetilde{\varphi}_{\mathrm{svd}}(i)=\frac{\varphi_{\mathrm{svd}}(i)-\min_j \varphi_{\mathrm{svd}}(j)}{\max_j \varphi_{\mathrm{svd}}(j)-\min_j \varphi_{\mathrm{svd}}(j)}.
\end{equation}
Similarly, we can obtain the normalized $\widetilde{\varphi}_{\text{deg}}(i)$, $\widetilde{\varphi}_{\text{cyc}}(i)$, and $\widetilde{\varphi}_{\text{hop}}(i)$, respectively.

The total vulnerability score is obtained through a weighted linear combination:
\begin{equation}\label{eq:vulnerability_fusion}
\varphi_i=w_1\widetilde{\varphi}_{\mathrm{deg}}(i)+w_2\widetilde{\varphi}_{\mathrm{hop}}(i)+w_3\widetilde{\varphi}_{\text{cyc}}(i)+w_4\widetilde{\varphi}_{\text{svd}}(i),
\end{equation}
where $w_1\sim w_4$ are weighting coefficients.

With the vulnerability scores $\{\varphi_i\}_{i=1}^n$ computed, we now determine the vulnerable set $\mathcal{T}$ by thresholding. A fixed threshold would be suboptimal, as the distribution of scores can vary across different codes and channel conditions. Instead, we employ an adaptive thresholding scheme that adjusts to the statistical properties of the scores.

Let $\mu^{(\varphi)}$ and $\sigma^{(\varphi)}$ denote the sample mean and standard deviation of $\{\varphi_i\}_{i=1}^n$. We first compute an initial threshold based on the spread of the distribution:
\begin{equation}
\tau_{\text{initial}}=
\begin{cases}
\mu^{(\varphi)}+2\sigma^{(\varphi)}, & \text{if }\sigma^{(\varphi)}<0.1,\\[4pt]
P^{(\varphi)}_{85}, & \text{otherwise},
\end{cases}
\end{equation}
where $P^{(\varphi)}_x$ represents the $x$-th percentile. This rule adapts to the score concentration: when scores are tightly clustered (small $\sigma^{(\varphi)}$), a deviation-based threshold isolates outliers; when scores are more spread, a percentile threshold ensures a stable fraction of nodes is selected.

To prevent extreme values, we clip the initial threshold to lie between the 70th and 95th percentiles:
\begin{equation}
\tau = \min\Bigl(\max\bigl(\tau_{\text{initial}}, P^{(\varphi)}_{70}\bigr),\, P^{(\varphi)}_{95}\Bigr).
\end{equation}
This dual-stage selection ensures that the threshold is neither too aggressive (selecting too few nodes) nor too lenient (selecting too many), while remaining adaptive to the score distribution.

The final vulnerable set is then defined as the nodes whose scores meet or exceed this threshold:
\begin{equation}\label{eq:VS}
\mathcal{T}=\bigl\{\,i\mid \varphi_i\ge \tau\,\bigr\}.
\end{equation}
This set $\mathcal{T}$ is computed once per code and stored for online use.

\subsubsection{Vulnerable set attack}
With the vulnerable set $\mathcal{T}$ fixed offline, the online attack must decide how much power to allocate to each node based on the instantaneous channel state. The key insight is that even within the vulnerable set, nodes differ in their current reliability as reflected by the received LLRs. A node with a very small LLR magnitude is already close to being unreliable; perturbing it further requires little energy. Conversely, a node with a large LLR magnitude is confidently decoded and requires a larger perturbation to flip.

We therefore design a gradient-driven attack that allocates power proportionally to each node's vulnerability, with nodes in $\mathcal{T}$ receiving a higher allocation. The attack proceeds in two steps: first, compute the direction that maximally reduces decoding reliability for each symbol; second, scale the perturbation based on vulnerable-set membership.

For a received complex symbol $r_i = r^{(x)}_i + jr^{(y)}_i$, let 
\begin{equation*}
    f(r_i) = \sum_{b=1}^{B} \left|L_b\left(r_i, \frac{\sigma^2_\omega}{|h|^2}\right)\right|
\end{equation*}
be the sum of absolute LLRs over the $B$ bits mapped to that symbol (e.g., $B=2$ for QPSK, $B=4$ for 16-QAM). This quantity serves as a proxy for the decoder's confidence in that symbol; a smaller $f(r_i)$ indicates that the symbol's bits are closer to the decision boundary and thus more likely to be flipped by the BP decoder. The adversary's goal is to minimize the total confidence $\sum_i f(r_i)$ to induce a decoding failure.

To achieve this minimization with minimal perturbation power, we employ a gradient descent approach. Because $f$ is not differentiable at points where any $|L_b|$ is zero, we use a symmetric finite-difference approximation to compute its gradient. The partial derivative with respect to the in-phase component $r^{(x)}_i$ is
\begin{equation}
    \frac{\partial f}{\partial r^{(x)}_i} \approx \frac{f(r^{(x)}_i+\epsilon)-f(r^{(x)}_i-\epsilon)}{2\epsilon},
\end{equation}
and similarly for the quadrature component $r^{(y)}_i$. The step size $\epsilon = 10^{-6} \cdot \max(1, |r_i|)$ scales with the symbol magnitude to maintain numerical accuracy across varying amplitudes. The complex gradient is then formed as
\begin{equation}\label{eq:gradient}
\nabla_i = \frac{\partial f}{\partial r^{(x)}_i} + j\frac{\partial f}{\partial r^{(y)}_i},
\end{equation}
which points in the direction of the steepest ascent of $f$. Since our objective is to minimize $f$, we will move each symbol in the opposite direction, $-\nabla_i$, to maximally reduce the reliability of its bit-wise LLRs.

To concentrate power on the most effective targets, we adopt an adaptive attack strategy that assigns each symbol a weight reflecting both its structural vulnerability (membership in $\mathcal{T}$) and its current sensitivity (inverse of the gradient magnitude). Unlike traditional attack methods, our approach adjusts the attack direction based on the channel state, rather than employing a fixed and rigid attack strategy. Specifically, the one-step perturbation is designed as
\begin{equation}\label{eq:perturbation}
s_{i,t} = \eta \cdot\bigl(1 + e \cdot \mathbb{I}_{{i \in \mathcal{T}}}\bigr) \cdot \frac{-\nabla_i}{|\nabla_i| + \varepsilon},
\end{equation}
where $\varepsilon = 10^{-6}$ prevents division by zero, and $\eta$ is a small step size controlling the attack power added per iteration. The indicator function $\mathbb{I}_{{i \in \mathcal{T}}}$ assigns nodes in the vulnerable set $e$ times the baseline weight, focusing energy on structurally weak points. Critically, the gradient $\nabla_{i,t}$ is recomputed at each step $t$ based on the currently perturbed signal $r_{i,t}$, allowing the attack to adapt to the changing landscape of $f$. The term $1/(|\nabla_i|+\varepsilon)$ acts as an adaptive per-symbol gain: symbols with smaller gradients (i.e., already near a decision boundary) receive a larger relative update, as less energy is needed to push them over the edge.

Recall that our goal is to find the smallest $\rho^*$ such that the attack forces the distortion above $\mathcal{D}^*$. The total perturbation $s_i$ for symbol $r_i$ is obtained by accumulating these one-step updates $s_{i,t}$ iteratively. Starting from a small initial power $\rho_0$, we construct the perturbation using \eqref{eq:perturbation} with the current $\rho$, apply it, and check whether the decoder fails. At each step $t$, the gradients $\nabla_{i,t}$ are recomputed based on the currently perturbed signal $r_{i,t} = r_{i,0} + \sum_{k=1}^{t-1} s_{i,k}$, where $r_{i,0}$ is the initial received signal, which allows the algorithm to adapt to the changing $f(r_{i,t})$. After a series of iterative attacks, the reconstruction process ultimately fails, and we accumulate all the $s_{i,t}$ to determine the optimal $\rho^*$.

This adaptive, iterative scheme directly mirrors the optimization problem in \eqref{eq:problem} by seeking the minimal perturbation path to failure. It avoids over-perturbation by stopping as soon as the target (decoding failure) is achieved, naturally handles varying channel conditions by recalculating gradients, and continuously refines the attacking direction based on the current state. The resulting $\rho^*$ provides an empirical measure of the classical system's robustness, which can be directly compared with the semantic system's counterpart.

\subsection{A RL-based GMS Attack for Classical Systems}
The vulnerable set attack introduced in Section \ref{sec:VA} leverages structural insights from coding theory to craft deterministic, highly efficient perturbations. To validate its effectiveness and establish that its performance is not merely an artifact of a particular heuristic, we develop a complementary, data-driven attack based on RL. This attack, termed the Gaussian mixture sequential (GMS) attack, serves as a powerful and general baseline. Unlike the deterministic vulnerable set attack, GMS attack sequentially adds Gaussian noise to contiguous blocks of symbols. The cumulative effect yields a total perturbation whose distribution is a mixture of Gaussians, offering a flexible and expressive attack strategy that learns to exploit system vulnerabilities directly from interaction.

We model the attack process as a finite-horizon Markov Decision Process (MDP). At each discrete time step $t$, the adversary observes the current state of the communication and chooses an action that determines the location of a stochastic perturbation. The goal is to learn a policy that minimizes the cumulative expected attack power required to force the reconstruction distortion $\mathcal{D}$ above the target threshold $\mathcal{D}^*$. 

\subsubsection{MDP}
The MDP components are defined below.

\textit{State}.
The state at step $t$ is the perturbed signal vector, which we denote as $\widetilde{\bm{y}}^{(t)} \in \mathbb{R}^N$. We initialize $\widetilde{\bm{y}}^{(0)} = \bm{r}$, which is the clean received signal before any adversarial intervention. The state thus captures the full instantaneous condition of the channel output after all previous attack steps. This high-dimensional, continuous-valued representation necessitates function approximation for value estimation, which we address using a DNN.

\textit{Action}. 
At each step $t$, the action $a^{(t)} \in \{1, 2, \ldots, N\}$ selects a starting index for the attack. Once the starting position is chosen, the adversary attacks a contiguous block of $\lceil \log_2 N \rceil$ consecutive symbols, beginning at $a^{(t)}$ and wrapping around modulo $N$ if necessary. 

For the selected block, the adversary injects zero-mean Gaussian noise whose variance is matched to the instantaneous power of the current signal in that region. Specifically, for each symbol index $j$ in the attack block, we generate i.i.d. Gaussian noise samples $n_j^{(t)} \sim \mathcal{N}(0, \sigma_t^2)$, where the variance $\sigma_t^2$ is set to the average power of the current signal $\widetilde{\bm{y}}^{(t-1)}$ over the attack region:
\begin{equation*}
\sigma_t^2 = \frac{1}{B} \sum_{k=0}^{B-1} \left| \widetilde{\bm{y}}_{a^{(t)}+k}^{(t-1)} \right|^2.
\end{equation*}
This adaptive scaling ensures that the perturbation magnitude remains commensurate with the local signal strength, preventing the agent from wasting power on already-weak symbols.

The signal is then updated via a convex combination that interpolates between the current signal and the noise:
\begin{equation}\label{eq:gms_update}
\widetilde{\bm{y}}_j^{(t)} = (1-\alpha) \widetilde{\bm{y}}_j^{(t-1)} + \alpha \bm{n}_j^{(t)}, \quad \forall j \in \mathcal{B}_t,
\end{equation}
where $\mathcal{B}_t = \{a^{(t)}, a^{(t)}+1, \ldots, a^{(t)}+B-1\}$ (with indices taken modulo $N$), and $\alpha \in [0,1]$ is a fixed mixing coefficient that controls the noise injection strength. For symbols outside $\mathcal{B}_t$, the signal remains unchanged. The mixing parameter $\alpha$ is a hyperparameter that governs the trade-off between attack aggressiveness and step-wise control; a smaller $\alpha$ allows finer-grained adjustments but may require more steps to achieve the target distortion.

The sequential application of these localized perturbations results in a cumulative attack distribution that is a mixture of Gaussians, hence the name GMS attack.


\textit{Reward.}
The ultimate objective is to minimize the expected total attack power $\rho^* = \mathbb{E}[|\bm{s}|^2]$, where $\bm{s} = \widetilde{\bm{y}}^{(T)} - \bm{r}$ is the cumulative perturbation accumulated over an episode of length $T$. To cast this as a reward maximization problem, we define an immediate reward that penalizes each increment of attack power.

Let $\bm{s}^{(t)} = \widetilde{\bm{y}}^{(t)} - \bm{r}$ denote the cumulative perturbation after $t$ steps, with $\bm{s}^{(0)} = \bm{0}$. We define the immediate reward as the negative of the increase in total attack power from step $t$ to $t+1$. That is
\begin{equation}
\mathfrak{R}^{(t)} = |\bm{s}^{(t)}|^2 - |\bm{s}^{(t+1)}|^2.
\end{equation}
With this definition, the cumulative reward over an episode of $T$ steps telescopes to $-|\bm{s}^{(T)}|^2$. Thus, maximizing the expected cumulative reward is exactly equivalent to minimizing the expected total attack power $\rho^* = \mathbb{E}[|\bm{s}^{(T)}|^2]$.

The episode terminates at step $T$ if the distortion after applying the $T$-th action meets or exceeds the target, i.e., $D(\bm{x}, \widehat{\bm{x}}^{(T)}) \ge \mathcal{D}^*$, where $\widehat{\bm{x}}^{(T)}$ is the reconstruction from $\widetilde{\bm{y}}^{(T)}$. No reward is assigned beyond termination.

\subsubsection{Training and attack generation}
To solve the MDP, we employ the Deep Q-Network (DQN) algorithm, which is well-suited for problems with high-dimensional state spaces and discrete actions.

The training episodes are generated as follows.
Starting from a clean received signal $\bm{r}$, we initialize $\widetilde{\bm{y}}^{(0)} = \bm{r}$.
At each step $t$, the agent selects a starting index $a^{(t)}$ according to the current policy. 
We then compute the adaptive noise variance $\sigma_t^2$ from the current signal in the chosen block, sample Gaussian perturbations $\bm{n}_j^{(t)} \sim \mathcal{N}(0, \sigma_t^2)$ for all $j \in \mathcal{B}_t$, and apply the update in \eqref{eq:gms_update} to obtain $\widetilde{\bm{y}}^{(t+1)}$.
The received signal is then passed through the receiver chain (demodulator, channel decoder, source decoder) to obtain the reconstruction $\widehat{\bm{x}}^{(t+1)}$ and compute the distortion $D(\bm{x}, \widehat{\bm{x}}^{(t+1)})$. The episode continues until $\mathcal{D} \geq \mathcal{D}^*$.

After training converges, we use the learned policy to generate attacks for evaluation. For a given test sample, we run the trained agent to produce a sequence of actions and corresponding perturbations. The total perturbation is accumulated as $\bm{s} = \widetilde{\bm{y}}^{(T)} - \bm{r}$, and the expected attack power is $\rho^* = \mathbb{E}[|\bm{s}|^2]$.
Because the increments $\Delta \bm{s}^{(t)}$ are not independent (they depend on the state evolution), we estimate $\rho^*$ empirically by averaging over multiple test episodes.

\begin{rem}
The GMS attack provides a powerful and general baseline for several reasons. First, by learning directly from interaction, it can discover vulnerabilities that are not apparent from static structural analysis. Second, the sequential nature allows the attack to adapt its strategy based on intermediate decoding outcomes, potentially achieving the target distortion with lower total power than any fixed one-shot attack. Third, because the perturbation at each step is stochastic, the learned policy is robust to variations in channel noise and source realizations. Comparing the performance of the GMS attack with the deterministic vulnerable set attack will reveal whether the structural insights from Section~\ref{sec:VA} capture the essential vulnerabilities of classical systems, or whether more sophisticated, adaptive strategies can achieve even greater efficiency.
\end{rem}

\subsection{PGA Attack for Semantic Communication}
Having developed structure-aware and learning-based attacks for classical systems, we now turn to the semantic communication counterpart. Unlike classical systems where the adversary must exploit discrete combinatorial structures (e.g., parity-check matrices), semantic communication systems employ continuously differentiable DNN encoders and decoders. This end-to-end differentiability opens a principled attack design: the adversary can leverage gradient information to systematically increase the reconstruction distortion.

We propose the progressive gradient ascent (PGA) attack, an iterative method that finds the minimum-power perturbation $\bm{s}$ such that the reconstruction distortion exceeds $\mathcal{D}^*$. The attack operates under the same white-box assumption as before: the adversary knows the DeepJSCC encoder $f(\cdot;\bm{W}_f)$ and decoder $g(\cdot;\bm{W}_g)$, and observes the received signal $\bm{r}$.

\subsubsection{Attack formulation}
Let $\widehat{\bm{x}}(\bm{r}) = g(\bm{r};\bm{W}_g)$ denote the reconstruction from the clean received signal. When a perturbation $\bm{s}$ is injected, the received signal becomes $\bm{y} = \bm{r} + \bm{s}$, yielding reconstruction $\widehat{\bm{x}}(\bm{y})$. Because the distortion $D\bigl(\bm{x}, \widehat{\bm{x}}(\bm{y})\bigr)$ is a differentiable function of $\bm{y}$ (through the decoder), we can compute its gradient and use it to guide the search for a minimal perturbation.

The proposed PGA attack builds upon the principled framework of projected gradient descent (PGD) \cite{madry2017towards}, which has been established as a universal first-order adversary in the adversarial machine learning literature. Specifically, PGD-based attacks have been shown to provide a tight lower bound on the required perturbation magnitude for a given misclassification target \cite{carlini2017towards,madry2017towards}. By adapting this method to our communication setting, we inherit its theoretical guarantees as a strong adversary while introducing several improvements tailored to our problem.

The PGA attack iteratively builds a perturbation by taking steps in the direction that most rapidly increases the distortion. 
At each iteration $t$, we compute the gradient of the distortion with respect to the current signal.
For the MSE distortion, this gradient can be written as
\begin{equation}\label{e:36}
    \nabla_{\bm{y}} D\bigl(\bm{x}, \widehat{\bm{x}}(\bm{y}^{(t)})\bigr) = 2 \left(\bm{\mathcal{J}}_{g}(\bm{y}^{(t)}) \right)^{\top} \bigl(\bm{x} - \widehat{\bm{x}}(\bm{y}^{(t)})\bigr).
\end{equation}
Then, we design the update direction to be the normalized gradient, which points in the direction of steepest ascent of the distortion. The signal is then updated as
\begin{equation}\label{e:recursion}
    \bm{y}^{(t+1)} \! = \! \bm{y}^{(t)} \! + \! \alpha \! \cdot \! \frac{\nabla_{\bm{y}} D\bigl(\bm{x}, \widehat{\bm{x}}(\bm{y}^{(t)})\bigr)}{\|\nabla_{\bm{y}} D\bigl(\bm{x}, \widehat{\bm{x}}(\bm{y}^{(t)})\bigr)\| \! + \! \varepsilon} 
    \! \triangleq \!  \bm{y}^{(t)} \! + \!  \bm{s}^{(t)},
\end{equation}
with $\alpha$ is the step size and $\varepsilon$ is a small constant to avoid division by zero. Typically, the perturbations are small. Therefore, we stop the update process at the first iteration $T$ such that $D(\bm{x}, \widehat{\bm{x}}(\bm{y}^{(T)})) \ge \mathcal{D}^*$.

\subsubsection{Gradient properties and attack dynamics}
Combining the definitions in \eqref{e:recursion}, the general expression for each attack step $\bm{s}^{(t)}$ can be obtained via recursion and mathematical induction as
\begin{eqnarray}\label{e:s_t}
&&\hspace{-0.8cm} \bm{s}^{(t)} = \alpha^{(t)} \left( \bm{\mathcal{J}}_{\!g}(\bm{y}^{(t)}) \right)^\top \\
&&\hspace{-0.5cm} \left[ \prod_{i=0}^{t-1} \! \left( \! \bm{I}_M \! + \! \alpha^{(i)} \bm{\mathcal{J}}_{\!g}(\bm{y}^{(i)}) \! \left( \! \bm{\mathcal{J}}_{\!g}(\bm{y}^{(i)}) \! \right)^{\!\!\top} \! \right) \!\! \right] \!\! \left( \! g(\bm{y}^{(1)}; \! \bm{W}_g) \! - \! \bm{x} \! \right), \notag
\end{eqnarray}
where $\bm{\mathcal{J}}_{\!g}(\bm{y}^{(0)}) = \bm{0}$, and $\alpha^{(t)} = \frac{\alpha }{\|\nabla_{\bm{y}} D\bigl(\bm{x}, \widehat{\bm{x}}(\bm{y}^{(t)})\bigr)\| + \varepsilon}$ is a normalized step size. The detailed derivation of \eqref{e:s_t} can be found in Appendix \ref{App:B}.

Analyzing \eqref{e:s_t} yields the following insights:
\begin{itemize}[leftmargin=0.5cm]
\item The rightmost term $(g(\bm{y}^{(1)})-\bm{x})$ represents the initial error driving force, namely the discrepancy between the decoded output $g(\bm{y}^{(1)})$ under channel noise and the ideal target $\bm{x}$. Thanks to noisy training, $(g(\bm{y}^{(1)})-\bm{x})$ would be very small.

\item The middle product term contains matrices of the form $\bm{I}_{M}  +  \alpha^{(i)} \bm{\mathcal{J}}_{g}(\bm{y}^{(i)}) \left( \bm{\mathcal{J}}_{g}(\bm{y}^{(i)})  \right)^{\top}$, which are positive definite with all eigenvalues larger than one. Through successive spectral amplification, this multiplicative structure automatically identifies and strengthens the feature dimensions that are most sensitive to decoding errors. As a result, the attack energy is adaptively concentrated onto the most vulnerable subspace.

\item Finally, the factor $(\bm{J}_g(\bm{y}^{(t)}))^\top$ projects the exponentially amplified error back into the input signal space, and the coefficient $\alpha^{(t)}$ constrains the effective step size to $\alpha$, mitigating the risk of gradient explosion or vanishing and stabilizing the attack dynamics.
\end{itemize}

After $T$ steps, the accumulated perturbation is $\bm{s} = \sum_{t=1}^{T} \bm{s}^{(t)} = \bm{y}^{(T)} - \bm{r}$, and the corresponding attack power is $\rho^* = \|\bm{s}\|^2$.

\subsubsection{C\&W Attack as a Baseline}
To assess the effectiveness of the proposed PGA attack, we compare it against a strong gradient-based baseline: the Carlini-Wagner (C\&W) attack \cite{carlini2017towards}. Originally developed for image classification, the C\&W attack minimizes a weighted sum of the perturbation norm and an adversarial loss that encourages misclassification. Its formulation provides a principled way to find minimal-norm perturbations and has become a standard benchmark in the adversarial machine learning literature. 

Adapting C\&W to the regression task of source reconstruction requires redefining the adversarial objective. Instead of pushing a classifier's output across a decision boundary, the attacker now aims to drive the reconstruction distortion above a target threshold $\mathcal{D}^*$ while minimizing the perturbation power. We therefore formulate the attack as
\begin{equation}\label{e:CW}
\min_{\boldsymbol{s}} \|\boldsymbol{s}\|_2^2 + c \cdot \max\left(D(\hat{\bm{x}}, \bm{x}) - \mathcal{D}^*, -\kappa\right)
\end{equation}
where $c > 0$ is a regularization parameter that trades off the perturbation magnitude against the distortion objective, and $\kappa \geq 0$ is a confidence margin that ensures the attack exceeds the target distortion. To solve the attack objective in \eqref{e:CW}, we employ an iterative optimization procedure using the Adam optimizer.

\section{Numerical Experiments}\label{sec:VI}
In this section, we empirically evaluate the adversarial robustness of both classical and semantic communication systems using the attack strategies developed in Section~\ref{sec:V}. Our goal is to quantify the minimum attack power $\rho^*$ to provide a direct comparison of their inherent resilience under worst-case adversarial interference. We consider two representative tasks: image transmission and massive MIMO CSI transmission.

\subsection{Image Transmission}
We first consider the task of transmitting images over a wireless channel in the presence of an adversary. The CIFAR-10 dataset \cite{krizhevsky2009learning} is used, consisting of $60,000$ RGB images of size $32\times 32\times 3$. We partition the dataset into a training set of $50,000$ images for training the DeepJSCC model, and a test set of $10,000$ images for evaluating both classical and semantic communication systems.

\subsubsection{System setup}
The quality of the reconstructed image is quantified by the peak signal-to-noise ratio (PSNR), defined as
\begin{equation}\label{eq:psNR}
\text{PSNR} = 10 \cdot \log_{10} \left( \frac{\text{Peak}^2}{\text{MSE}} \right) \quad \text{(dB)},
\end{equation}
where $\text{Peak}$ is the maximum possible pixel value and $\text{MSE}$ is the MSE between the original and reconstructed images. For images normalized to $[0,1]$, $\text{Peak}=1$. In our experiments we set a target PSNR of $15$dB, which corresponds to a target MSE of $\mathcal{D}^* \approx 3\times{10^{-2}}$.

To ensure a fair comparison, both classical and semantic systems operate under the same bandwidth and power constraints. We define the \textit{bandwidth ratio} as the number of channel uses $N$ divided by the source dimension. For a CIFAR-10 image, the source dimension is $32 \times 32 \times 3 = 3072$. Throughout this subsection, we fix the bandwidth ratio to $1/4$, i.e., $N = 3072/4 = 768$ channel uses. The transmit power is normalized so that $\mathbb{E}[\|\bm{z}\|^2] = N$.

\textit{Classical system:} The classical transmitter employs SSCC. For source coding we use Better Portable Graphics (BPG), a state-of-the-art image compression standard. The compressed bitstream is then protected by a regular LDPC code, with code rate $R_c$ chosen to match the bandwidth ratio. The coded bits are modulated using either QPSK or 16-QAM with Gray mapping.

\begin{table}[!t]
\centering
\caption{DeepJSCC encoder and decoder architectures and training hyperparameters, where $C$ is the number of image channels, $H$ and $W$ are the spatial dimensions, and $C_{\text{num}}$ is a channel multiplier that determines the final number of channel uses.}
\label{tab:imageDNN}
\scriptsize
\begin{tabular}{lll}
\toprule
\multicolumn{1}{l}{\textbf{Component}} & \textbf{Layer Description} & \textbf{Output Size} \\
\midrule
\multirow{4}{*}{\textbf{Encoder}}
& Conv1 + ReLU & $(16, H/2, W/2)$ \\
& Residual Block & $(16, H/2, W/2)$ \\
& Conv2 + ReLU & $(2C_{\text{num}}, H/4, W/4)$ \\
& Residual Block & $(2C_{\text{num}}, H/4, W/4)$ \\
\midrule
\multirow{6}{*}{\textbf{Decoder}}
& Deconv1 + ReLU & $(32, H/4, W/4)$ \\
& Deconv2 + Sigmoid & $(32, H/4, W/4)$ \\
& Residual Block & $(32, H/4, W/4)$ \\
& Deconv3 & $(16, H/2, W/2)$ \\
& Residual Block & $(16, H/2, W/2)$ \\
& Deconv4 + Sigmoid & $(C, H, W)$ \\
\midrule
\multirow{4}{*}{\textbf{Training}}
& Number of training epochs & 1000 \\
& Batch size & 512 \\
& Learning rate & $10^{-3}$ \\
& Optimizer & AdamW \\
\bottomrule
\end{tabular}
\end{table}

\textit{Semantic system:} The semantic system is implemented as a DeepJSCC architecture. The encoder $f(\cdot;\bm{W}_f)$ is a convolutional neural network (CNN) with residual blocks and ReLU activations, which maps the input image directly to $N$ channel symbols $\bm{z}$. The decoder $g(\cdot;\bm{W}_g)$ consists of deconvolution layers with residual blocks and a final sigmoid activation to ensure the output is in the valid pixel range $[0,1]$. The network is trained end-to-end on the CIFAR-10 training set using the AdamW optimizer with a learning rate of $10^{-3}$ and a batch size of $512$. The loss function is the MSE between the original and reconstructed images. Training proceeds for $1000$ epochs. Detailed DNN layer specifications and hyperparameters are provided in Table~\ref{tab:imageDNN}.

\subsubsection{Attack setup}
The detailed parameter settings for the four attacks introduced in Section~\ref{sec:V} are as follows. All attacks are evaluated over the test set.

\textit{Vulnerable set attack (classical):} The vulnerable set $\mathcal{T}$ is precomputed offline for the LDPC code using the method described in Section~\ref{sec:VA}. The weighting coefficients in \eqref{eq:vulnerability_fusion} are set to $w_1=0$, $w_2=1$, $w_3=0.35$, and $w_4=0.25$. Note that $w_1=0$ because we use a regular LDPC code, hence all VNs have equal degree and degree provides no discriminative information. The extra energy allocated to nodes in the vulnerable set is set to $e=9$ in \eqref{eq:perturbation}, meaning nodes in $\mathcal{T}$ receive ten times the baseline perturbation weight. 

\textit{RL-based GMS attack (classical):} We implement the GMS attack using a DQN with a learning rate of $3\times10^{-4}$. The discount factor is $\gamma=0.99$, and the batch size is $64$. The replay buffer size is $10^5$. DQN is updated every $2048$ steps, and the total training timesteps is $10^5$. 

\textit{PGA attack (semantic):} The PGA attack uses a fixed step size $\alpha_t = 0.1$ for all iterations. The numerical stability constant for gradient normalization is $\varepsilon = 10^{-8}$. 

\textit{C\&W attack (semantic):} The C\&W attack is implemented as described in \eqref{e:CW} using the Adam optimizer with a learning rate of $0.01$. The initial regularization constant is $c=1.0$, with bounds $c_{\min}=10^{-6}$ and $c_{\max}=100$. For each sample, the optimization runs for up to $2000$ iterations or until the distortion target is met.

\begin{figure}[t]
    \centering
    \includegraphics[width=0.75\linewidth]{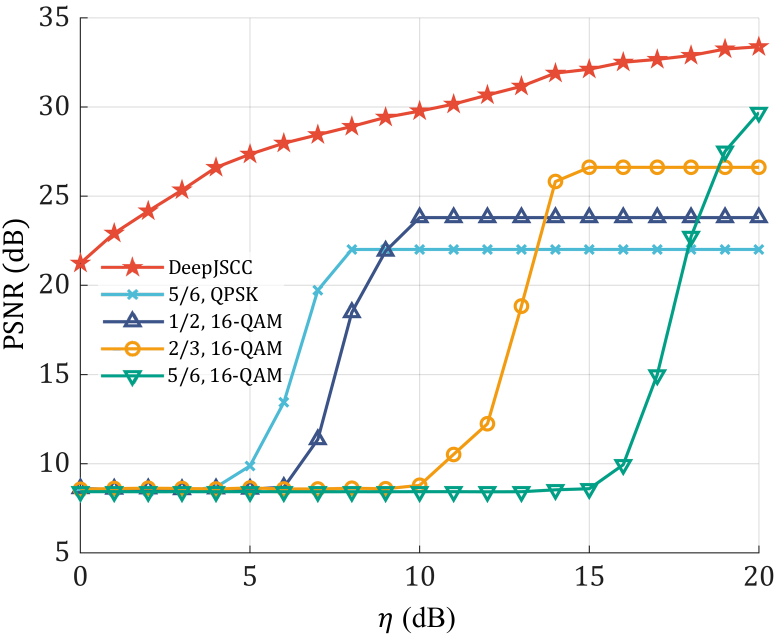}
    \caption{PSNR of reconstructed images for classical and semantic systems in the absence of adversarial attacks. The semantic system exhibits graceful degradation with SNR, while classical SSCC systems show a pronounced cliff effect.}
    \label{fig:chapter6_figure1}
\end{figure}

\subsubsection{Experimental results}
We begin by establishing the baseline performance without attacks. Figure~\ref{fig:chapter6_figure1} presents the PSNR versus SNR curves for both classical and semantic systems. For the classical SSCC schemes, we show results for four combinations of modulation and coding. As expected, the semantic communication system provides graceful degradation with SNR, maintaining reasonable reconstruction quality even at low SNRs. In contrast, the SSCC baselines exhibit a pronounced cliff effect: performance remains poor (dominated by decoding failures) below a certain SNR threshold, then rapidly transitions to a regime where performance is limited by compression and quantization distortion rather than channel errors. This observation, consistent with many prior works, serves as the reference baseline for the adversarial scenarios that follow.

\begin{figure}[t]
    \centering
    \includegraphics[width=0.75\linewidth]{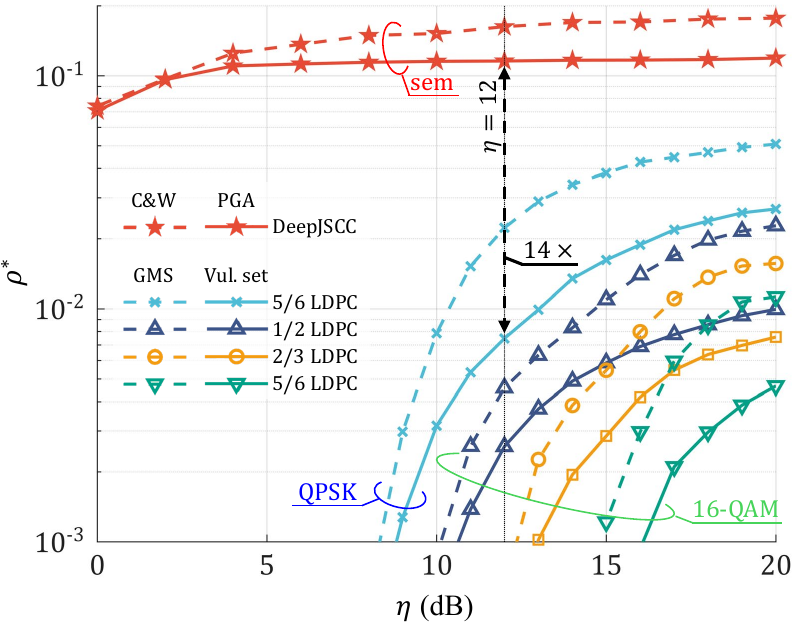}
    \caption{ Minimum attack power $\rho^*$ required to achieve the target distortion on the image transmission task for classical and semantic communication systems with different attack schemes. Lower $\rho^*$ indicates higher vulnerability. The semantic system consistently requires higher attack power than the classical system, demonstrating greater robustness.}
    \label{fig:chapter6_figure2}
\end{figure}

We now evaluate the minimum perturbation energy $\rho^*$ required to induce the target distortion ($\mathcal{D}^*$ corresponding to $15$dB PSNR) using the attack schemes proposed in Section~\ref{sec:V}. The complete experimental results are summarized in Figure~\ref{fig:chapter6_figure2}. Our main observations are as follows:
\begin{itemize}[leftmargin=0.5cm]
    \item For the classical system, the vulnerable set attack requires significantly lower attack power to achieve the target distortion compared to the RL-based GMS attack across all SNRs. For example, for QPSK and $5/6$-LDPC, the RL-based GMS attack achieves the target with $\rho^* \approx 2.2\times 10^{-2}$ at $12$dB SNR, whereas the vulnerable set attack requires only $\rho^* \approx 7.5\times 10^{-3}$, corresponding to a $3\times$ reduction in attack power. This validates that the structural insights from coding theory can be exploited to craft substantially more efficient perturbations than learning-based approaches.
    \item For the semantic system, the proposed PGA attack consistently requires much lower power than the C\&W attack across the entire SNR range. This demonstrates that our gradient-based approach, which directly optimizes for the minimal perturbation to reach the distortion target, is more powerful than the margin-based C\&W formulation adapted from classification tasks.
    \item Most importantly, the semantic communication system demonstrates significantly higher adversarial robustness than the classical SSCC system. Comparing the most effective attacks for each system (vulnerable set for classical, PGA for semantic), we observe that $\rho^*_{\text{sem},1}$ is consistently larger than $\rho^*{\text{sscc},1}$ across all SNRs. For instance, at $12$dB SNR, $\rho^*{\text{sem},1} \approx 0.11$ while $\rho^*_{\text{sscc},1} \approx 7.5\times 10^{-3}$. Thus, it is $14\times$ more difficult to attack semantic communication than the classical system. These empirical results provide strong evidence against the prevailing belief that semantic communication systems are inherently vulnerable.
\end{itemize}

\subsubsection{Ablation studies}
To further demonstrate the effectiveness of our vulnerable set attack, we perform an ablation study to investigate whether nodes from the vulnerable set indeed enhance attack efficiency. Figure~\ref{fig:chapter6_figure4} compares the minimum perturbation energy required to achieve the degradation target with and without extra energy allocated to vulnerable set nodes.
As shown, allocating extra perturbation energy to vulnerable set nodes consistently reduces the required attack energy across all modulation-coding settings,  with particularly pronounced gains at higher code rates. This confirms that the structural vulnerabilities identified offline translate into real-world attack efficiency gains, and that the adaptive thresholding scheme successfully identifies the most impactful nodes.

\begin{figure}[t]
    \centering
    \includegraphics[width=0.9\linewidth]{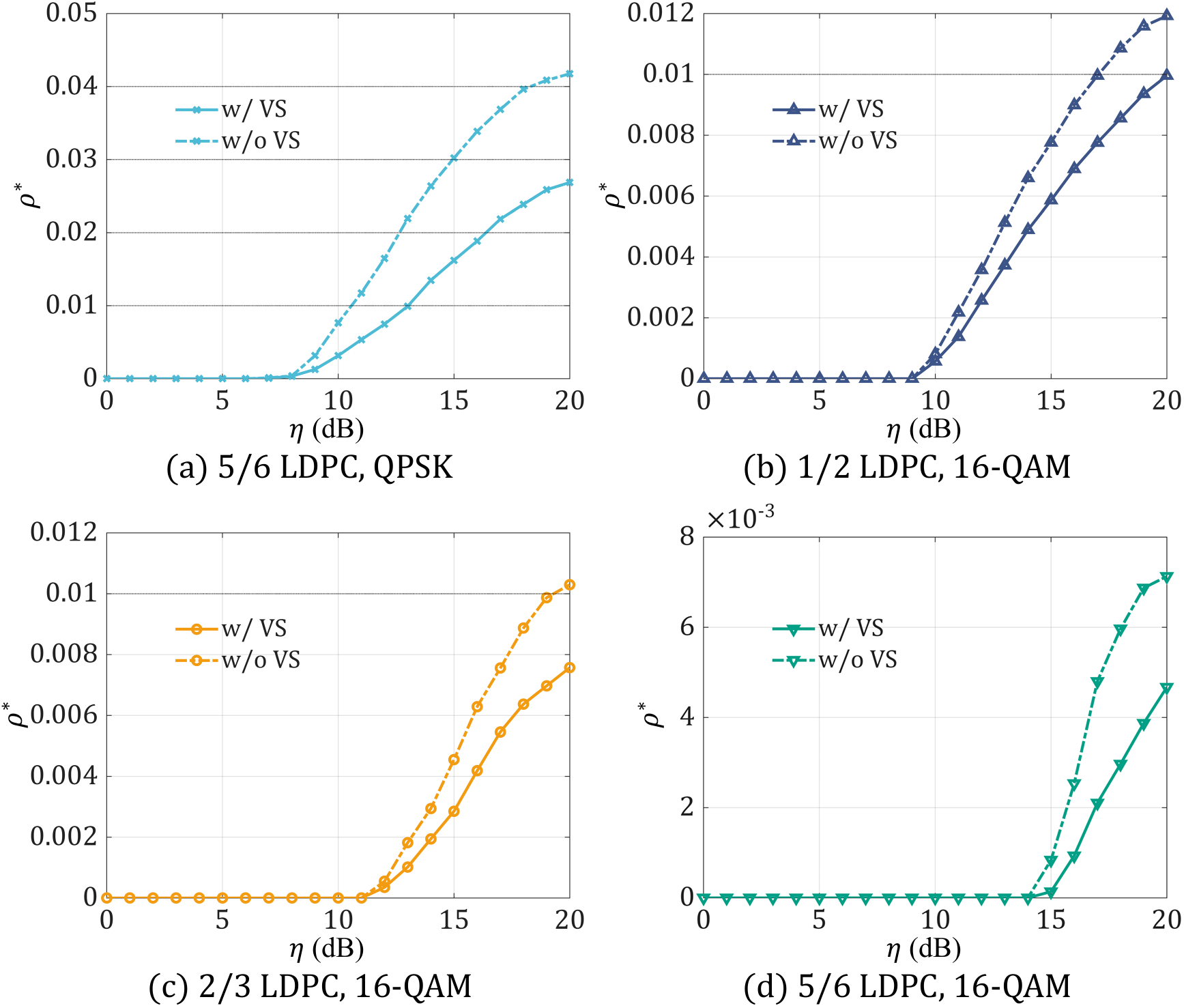}
    \caption{Ablation study on the impact of vulnerable node targeting in the vulnerable set attack for the image transmission task. Allocating extra perturbation energy to nodes in the vulnerable set (``w/ VS'') consistently reduces the required attack power compared to the baseline without targeting (``w/o VS'').}
    \label{fig:chapter6_figure4}
\end{figure}

\subsection{Massive MIMO CSI Transmission}

Next, we evaluate the adversarial robustness of classical and semantic systems in a massive MIMO CSI transmission task, a critical component in modern wireless systems where accurate channel knowledge at the base station (BS) is essential for beamforming and interference management. We consider a single-cell downlink massive MIMO system with $N_t$ transmit antennas at the BS and one receive antenna at the user equipment (UE). The CSI matrix $\bm{H} \in \mathbb{C}^{N_c \times N_t}$ stacks all channel vectors across $N_c$ OFDM subcarriers. To reduce feedback overhead from the UE to the BS, we sparsify $\bm{H}$ in the angular-delay domain using a 2D discrete Fourier transform (DFT):
\begin{equation}
    \bm{H} = \bm{F}_d \widetilde{\bm{H}} \bm{F}_a^H,
\end{equation}
where $\bm{F}_d$ and $\bm{F}_a$ denote the $N_c \times N_c$ and $N_t \times N_t$ DFT matrices corresponding to the delay and angular dimensions, respectively. The sparse representation $\widetilde{\bm{H}}$ enables significant compression of CSI, since most channel energy is concentrated in a few coefficients due to limited scattering clusters in massive MIMO systems. Only these significant coefficients need to be quantized and fed back to the BS, reducing the feedback overhead from $N_t \times N_c$ to a manageable size.

\subsubsection{System setup}
We generate channel matrices using the COST 2100 channel model \cite{liu2012cost} for the outdoor rural scenario at the $300$MHz band. This scenario has been widely utilized in CSI transmission tasks \cite{wen2018deep,guo2022overview}. The BS is located at the center of a $400,\mathrm{m}\times 400,\mathrm{m}$ square area, and the UEs are randomly deployed within this area for each sample. The BS employs a uniform linear array (ULA) with $N_t=32$ antennas and $N_c=1024$ subcarriers. When transforming the channel matrix into the angular-delay domain, we retain the first $32$ delay taps, resulting in $\widetilde{\bm{H}}$ of size $32 \times 32$. The training, validation, and testing sets contain $100,000$, $30,000$, and $20,000$ samples, respectively. All testing samples are excluded from the training and validation sets.

For the classical SSCC system, we use DFT-based compression that leverages the sparsity of the channel matrix in the angular domain. Specifically, we retain the $K$ largest-magnitude coefficients from $\widetilde{\bm{H}}$, where $K$ is chosen to match the bandwidth ratio. These coefficients are quantized using a non-uniform scalar quantizer and then protected by an LDPC code. For the results reported here, we use $K=256$ (equivalent to $25\%$ of the original $32\times32=1024$ coefficients, matching the bandwidth ratio of $1/4$ used in the image task) with 8-bit quantization, and LDPC coding with rates $1/2$, $2/3$, and $5/6$ for QPSK and 16-QAM, respectively.

\begin{figure}[t]
    \centering
    \includegraphics[width=0.75\linewidth]{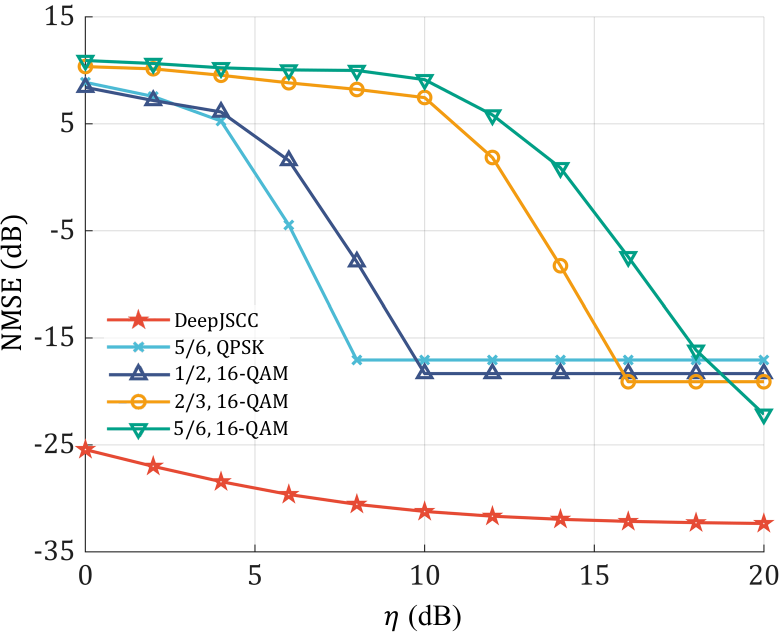}
    \caption{NMSE of reconstructed CSI matrices for classical and semantic systems in the absence of adversarial attacks. The semantic system achieves consistently lower NMSE across all SNRs, while classical systems exhibit a cliff effect.}
    \label{fig:chapter6_figure6}
\end{figure}

For the semantic communication system, we use the same CNN architecture as in the image transmission task, treating the $32\times 32$ angular-delay matrix as a single-channel image. The encoder maps this matrix to $N=256$ channel symbols (since $3072/4=768$ for images, but here the source dimension is $1024$, so $N=1024/4=256$). The decoder reconstructs the $32\times 32$ matrix from the noisy received symbols. Training uses the AdamW optimizer with a learning rate of $10^{-3}$, batch size of $512$, and MSE loss.

To evaluate reconstruction quality, we use the normalized MSE (NMSE) defined as
\begin{equation}
\text{NMSE}\text{ (dB)} = 10 \log_{10} \left( \frac{\| \bm{x} - \hat{\bm{x}} \|^2}{\| \bm{x} \|^2} \right).
\end{equation}
Lower NMSE values (more negative in dB) indicate better reconstruction performance. We set a target NMSE of $-16$dB for the CSI transmission task.

\subsubsection{Experimental results}

Analogous to the image transmission task, we first evaluate the reconstruction performance of both systems in the absence of attacks. Figure~\ref{fig:chapter6_figure6} shows the NMSE versus SNR curves. The semantic communication system achieves consistently lower NMSE across the entire SNR range, with graceful degradation. In contrast, the LDPC-based CSI transmission is highly sensitive to channel conditions: below a threshold SNR, decoding failures cause catastrophic degradation, with NMSE rising sharply. Above the threshold, performance is limited by the compression and quantization distortion floor.

\begin{figure}[t]
    \centering
    \includegraphics[width=0.75\linewidth]{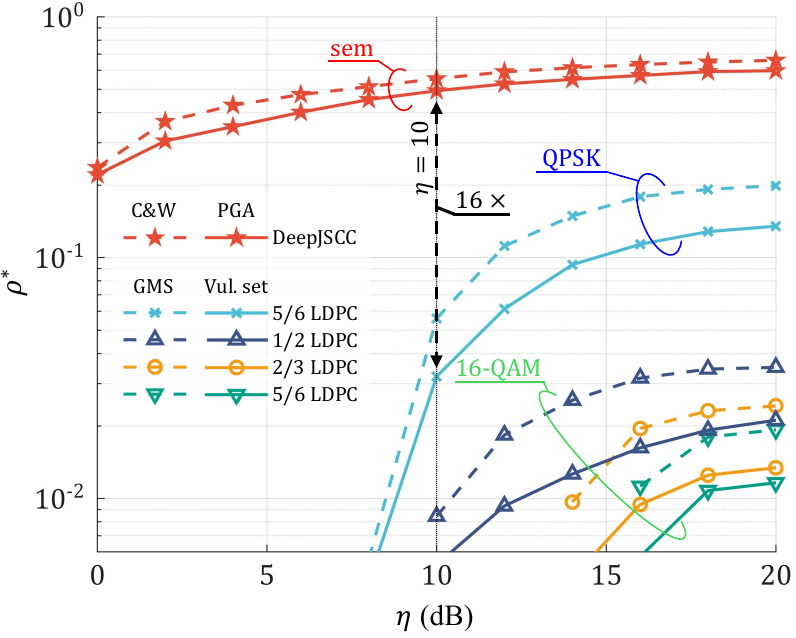}
    \caption{Minimum attack power $\rho^*$ required to achieve the target distortion on the massive MIMO CSI transmission task for classical and semantic communication systems with different attack schemes. The semantic system demonstrates substantially higher robustness, requiring up to $16\times$ more attack power than the classical system.}
    \label{fig:chapter6_figure3}
\end{figure}

Next, we introduce attacks following the methods in Section~\ref{sec:V} and evaluate the robustness of both systems. Figure~\ref{fig:chapter6_figure3} presents the minimum attack power $\rho^*$ as a function of SNR. The results mirror the conclusions drawn from the image transmission task:
\begin{itemize}[leftmargin=0.5cm]
\item For the classical SSCC system, the vulnerable set attack again proves substantially more efficient than the GMS attack, requiring up to $2\times$ less power to achieve the target NMSE across all SNRs. This confirms that the structured vulnerability identification is not task-specific but generalizes to different source statistics.
\item For the semantic system, PGA remains strictly more energy-efficient than C\&W, with the gap widening at lower SNRs.
\item The robustness disparity between the two systems is also pronounced in this task. The minimum attack power required for the semantic system consistently exceeds that for the classical system by a large margin. At $10$dB SNR, $\rho^*_{\text{sem},1} \approx 0.5$ while $\rho^*_{\text{sscc},1} \approx 3.1\times 10^{-2}$ (vulnerable set attack), meaning that semantic communication is more than $16\times$ difficult to attack than classical systems. This provides another strong piece of empirical evidence against the prevailing belief that semantic communication systems are inherently vulnerable.
\end{itemize}

\subsubsection{Ablation studies}

\begin{figure}[tb]
    \centering
    \includegraphics[width=0.9\linewidth]{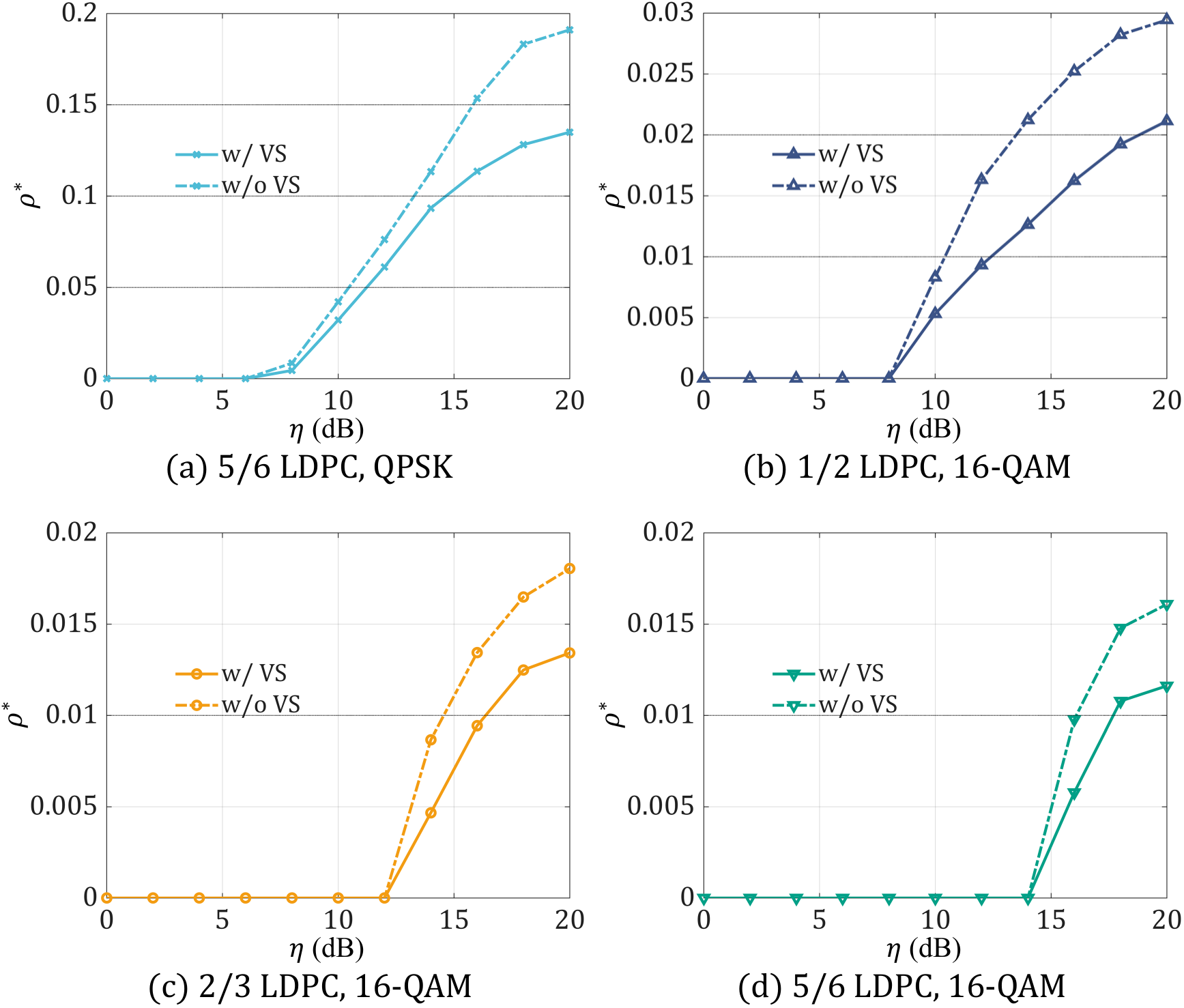}
    \caption{Ablation study on the impact of vulnerable node targeting in the vulnerable set attack for the CSI transmission task. Consistent with the image task, allocating extra energy to vulnerable set nodes significantly reduces the required attack power compared to the baseline.}
    \label{fig:chapter6_figure5}
\end{figure}

Finally, we investigate the effectiveness of the vulnerable set attack through ablation studies on the CSI task. Figure~\ref{fig:chapter6_figure5} compares the minimum perturbation energy with and without extra energy allocation to vulnerable set nodes.
The observations from Figure~\ref{fig:chapter6_figure5} corroborate the conclusions for image transmission. Across all configurations, attacks with extra perturbation energy allocated to vulnerable set nodes require less power than baseline attacks at the same SNR. This consistency across two very different tasks demonstrates that the benefit of vulnerable node targeting is not modality-specific; rather, it reflects a fundamental property of LDPC codes that can be exploited by an adversary regardless of the source statistics or modulation scheme.

\section{Conclusion}\label{sec:Conclusion}
This paper has challenged a deeply ingrained assumption in the emerging field of semantic communication: that its reliance on DL architectures renders it fundamentally more vulnerable to adversarial attacks than classical communication systems. Through rigorous theoretical analysis and extensive empirical validation, we have demonstrated a counterintuitive and consequential truth: semantic communication systems possess an inherent, unanticipated robustness to adversarial perturbations, which can even exceed that of their classical counterparts by more than an order of magnitude.

The implications, however, extend beyond this reversal. Our analysis exposes critical blind spots that demand further investigation. For semantic communication, the natural question becomes: can we deliberately enhance this implicit robustness? The Lipschitz constant of the decoder emerges as a key design parameter, suggesting that architectures explicitly regularized for smoothness, or trained with adversarial objectives, could push robustness even further. More ambitiously, can we characterize the fundamental trade-off between semantic efficiency and adversarial resilience, and design encoders that optimally balance both?

For classical systems, our vulnerable set attack reveals that decades of coding theory optimized for random noise have left structured codes dangerously exposed to intelligent adversaries. This opens an entirely new research frontier: the design of channel codes that are provably robust to targeted perturbations while maintaining near-capacity performance under benign conditions. Can we develop code families with tunable resilience, or decoding algorithms that detect and mitigate structured attacks? The path forward likely lies in hybrid approaches, i.e., leveraging the mathematical guarantees of classical codes while incorporating the adaptability of learning-based methods.

As wireless systems increasingly operate in contested environments and support safety-critical applications, adversarial robustness must transition from afterthought to first principle. This paper establishes that semantic communication, far from being a vulnerability, may offer a path toward that goal.

\appendices

\section{Proof of Lemma \ref{lem:D_sem_noise}}\label{App:A}
From the definition of $\mathcal{D}_{\text{sem},0}$, we have
\begin{eqnarray}
\hspace{-0.5cm}&&\hspace{-0.2cm} \mathcal{D}_{\text{Sem},0} 
= \mathbb{E}_{\bm{r}} \left[ \left\| g(\bm{r}; \bm{W}_g) - \bm{x}\right\|^2 \right] \notag\\
\hspace{-0.5cm}&& \stackrel{(a)}{\approx} \mathbb{E}_{\bm{z},\bm{\omega}}\left[ \left\| g\left(\left|h \right|\bm{z}; \bm{W}_g \right) + \bm{\mathcal{J}}_{ \! g}\left(\left|h \right|\bm{z} \right) \bm{\omega} - \bm{x} \right\|^2 \right] \notag\\
\hspace{-0.5cm}&& \stackrel{(b)}{=} \mathbb{E}_{\bm{z}}\left[(g(|h|\bm{z}; \bm{W}_g) - \bm{x})^\top \left(g\left(\left|h \right|\bm{z} ; \bm{W}_g \right) - \bm{x} \right) \right]  \notag\\
\hspace{-0.5cm}&& \hspace{2.75cm} + \mathbb{E}_{\bm{z},\bm{\omega}} \left[ \left(\bm{\mathcal{J}}_{ \! g} \left( \left|h \right|\bm{z} \right)\bm{\omega} \right)^\top \left(\bm{\mathcal{J}}_{ \! g} \left(\left|h \right|\bm{z} \right) \bm{\omega} \right) \right] \notag\\
\hspace{-0.5cm}&& \hspace{1.8cm} + 2\mathbb{E}_{\bm{z}} \left[  \left(g \left( \left|h \right|\bm{z}; \bm{W}_g \right) - \bm{x})^\top \bm{\mathcal{J}}_{ \! g}\left(|h|\bm{z} \right) \right] \mathbb{E}_{\bm{\omega}}[\bm{\omega} \right] \notag\\
\hspace{-0.5cm}&& \stackrel{(c)}{=} \mathbb{E}_{\bm{z}} \left[ \left(g \left( \left|h \right|\bm{z} ; \bm{W}_g\right) - \bm{x}\right)^\top \left( g \left( \left|h \right|\bm{z}; \bm{W}_g \right) - \bm{x} \right) \right]  \notag\\
\hspace{-0.5cm}&& \hspace{2.75cm} + \mathbb{E}_{\bm{z},\bm{\omega}}\left[ \left( \bm{\mathcal{J}}_{ \! g} \left( \left|h\right|\bm{z} \right)\bm{\omega} \right)^\top  \left(\bm{\mathcal{J}}_{ \! g} \left( \left|h \right|\bm{z}\right)\bm{\omega} \right) \right] \notag\\
\hspace{-0.5cm}&& \stackrel{(d)}{=}   \mathbb{E}_{\bm{z}}  \left[  \left(g  \left( \left|h \right|  \bm{z}; \bm{W}_g\right)   -  \bm{x} \right)^{\top}  \left(g  \left( \left|h \right|  \bm{z} ; \bm{W}_g\right)  -  \bm{x} \right) \right] \notag\\
\hspace{-0.5cm}&& \hspace{0.5cm} +  \sigma^2_{\omega} \mathbb{E}_{\bm{z}}  \left[  \sum_i  \sigma_{\bm{\mathcal{J}}_{  g},i}^2  \left( \left|h \right|  \bm{z} \right)  \right] 
= 
\sigma^2_\omega \mathbb{E}_{\bm{z}}  \left[  \sum_i  \sigma_{\bm{\mathcal{J}}_{ \! g},i}^2  \left( \left|h \right|  \bm{z} \right)  \right] \notag\\ 
\hspace{-0.5cm}&&
\stackrel{(e)}{\leq}
N \sigma^2_\omega G^2.        
    \end{eqnarray}
where (a) follows from \eqref{eq:taylor}, (b) exploits the independence between $\bm{z}$ and $\bm{\omega}$, (c) uses the fact that $\mathbb{E}_{\bm{\omega}}[\bm{\omega}] = 0$, (e) follows from the Lipschitz property of $g(\cdot)$, which implies that $\sum_i \sigma_i^2(\bm{\mathcal{J}}_g(|h|\bm{z}; \bm{W}_g)) \leq N G^2$, and (d) holds because 
\begin{eqnarray*}\label{e:N_attack}
\hspace{-0.3cm}& \hspace{-0.3cm}& \mathbb{E}_{\bm{z},\bm{\omega}}\Big[ \big( \bm{\mathcal{J}}_{g}(|h|\bm{z}) \bm{\omega} \big)^\top \big( \bm{\mathcal{J}}_{g}(|h|\bm{z}) \bm{\omega} \big) \Big]  \\
\hspace{-0.3cm}& = \hspace{-0.3cm}&  \mathbb{E}_{\bm{z},\bm{\omega}} \left[ \text{tr}\left\{ (\bm{\mathcal{J}}_{ \! g}(|h|\bm{z}))^\top \bm{\mathcal{J}}_{ \! g}(|h|\bm{z}) \bm{\omega} \bm{\omega}^\top \right\} \right] \\
\hspace{-0.3cm}& = \hspace{-0.3cm}& \text{tr}\left\{ \mathbb{E}_{\bm{z},\bm{\omega}} \left[ (\bm{\mathcal{J}}_{ \! g}(|h|\bm{z}))^\top \bm{\mathcal{J}}_{ \! g}(|h|\bm{z}) \bm{\omega} \bm{\omega}^\top \right] \right\} \\
\hspace{-0.3cm}& \overset{(d1)}{=} \hspace{-0.3cm}& \text{tr}\left\{ \mathbb{E}_{\bm{z}} \left[ (\bm{\mathcal{J}}_{ \! g}(|h|\bm{z}))^\top \bm{\mathcal{J}}_{ \! g}(|h|\bm{z}) \right] \mathbb{E}_{\bm{\omega}} \left[ \bm{\omega} \bm{\omega}^\top \right] \right\} \\
\hspace{-0.3cm}& = \hspace{-0.3cm}& \text{tr}\left\{ \mathbb{E}_{\bm{z}} \left[ (\bm{\mathcal{J}}_{ \! g}(|h|\bm{z}))^\top \bm{\mathcal{J}}_{ \! g}(|h|\bm{z}) \right] \sigma^2_\omega \bm{I}_N \right\} \\
\hspace{-0.3cm}& = \hspace{-0.3cm}&  \sigma^2_\omega \mathbb{E}_{\bm{z}} \left[ \| \bm{\mathcal{J}}_{ \! g}(|h|\bm{z}) \|^2 \right] \\
\hspace{-0.3cm}& = \hspace{-0.3cm}&  \sigma_\omega^2 \, \mathbb{E}_{\bm{z}}\Big[ \sum_i \sigma_{\bm{\mathcal{J}}_g,i}^2(|h|\bm{z}) \Big],
\end{eqnarray*}
in which (d1) follows from the statistical independence of the AWGN from all other signals and $\bm{I}_N$ is the $N \times N$ identity matrix. This completes the proof.

\section{Derivation of \eqref{e:s_t}}\label{App:B}

This appendix presents a detailed derivation of \eqref{e:s_t} using mathematical induction.
From the definitions in \eqref{e:recursion}, we have
\begin{eqnarray}\label{e:definition}
\bm{s}^{(t)} &\hspace{-0.6cm}=&\hspace{-0.6cm}
 \alpha^{(t)} \frac{\left. \nabla_{\bm{y}^{(t)}} D\bigl(\bm{x}, \widehat{\bm{x}}(\bm{y}^{(t)})\bigr) \right|_{\bm{y}^{(t)}=\bm{r}+\sum_{i=1}^{t-1} \bm{s}^{(i)}}}{\left\| \left. \nabla_{\bm{y}^{(t)}} D\bigl(\bm{x}, \widehat{\bm{x}}(\bm{y}^{(t)})\bigr) \right|_{\bm{y}^{(t)}=\bm{r}+\sum_{i=1}^{t-1} \bm{s}^{(i)}} \right\|} \\
&\hspace{-0.6cm} = &\hspace{-0.6cm} \alpha^{(t)} \bm{J}_g(\bm{y}^{(t)})^\top \big( g(\bm{y}^{(t)}; \bm{W}_g) - \bm{x} \big)
\nonumber \\
&\hspace{-0.6cm}\overset{\|\bm{s}^{(i)}\|\to 0}{=}&\hspace{-0.3cm} \! \alpha^{(t)}  \bm{J}_{\!g}( \bm{y}^{(t)} )^{\!\top} \!\! \left( \! g(\bm{y}^{(1)}; \bm{W}_{\!g}) \!+ \! \sum_{i=1}^{t-1} \bm{J}_g(\bm{y}^{(i)})\bm{s}^{(i)} \! - \! \bm{x} \! \right) \!.\nonumber
\end{eqnarray}

According to \eqref{e:s_t}, when $n=1$, we have:
\begin{equation}
\bm{s}^{(1)} = \alpha^{(1)} \left( \bm{\mathcal{J}}_{ \! g}(\bm{y}^{(1)}) \right)^\top \left( g(\bm{y}^{(1)}; \bm{W}_g) - \bm{x} \right),
\end{equation}
which satisfies \eqref{e:s_t}.

Assume the formula holds for all $n \leq t$, then for $n = t+1$, we have:
\begin{eqnarray}
\hspace{-0.5cm}&& \hspace{-0.2cm} \bm{s}^{(t+1)} \overset{\text{\eqref{e:definition}}}{=} \alpha^{(t+1)} \left( \bm{\mathcal{J}}_{  g}(\bm{y}^{(t+1)}) \right)^{\top} \\
\hspace{-0.5cm}&& \hspace{2.7cm}\left( g(\bm{y}^{(1)}; \bm{W}_g) - \bm{x} + \sum_{i=0}^{t} \bm{\mathcal{J}}_{  g}(\bm{y}^{(i)}) \bm{s}^{(i)} \right)  \notag\\
\hspace{-0.5cm}&& = \alpha^{(t+1)}  \left(  \bm{\mathcal{J}}_{  g}(\bm{y}^{(t+1)}) \right)^{ \top} \notag\\
\hspace{-0.5cm}&&\hspace{0.4cm} \left( g(\bm{y}^{(1)}; \bm{W}_g)  -  \bm{x}  +  \sum_{i=0}^{t-1} \bm{\mathcal{J}}_{  g}(\bm{y}^{(i)}) \bm{s}^{(i)}  +  \bm{\mathcal{J}}_{  g}(\bm{y}^{(t)}) \bm{s}^{(t)}  \right) 
 \notag\\
\hspace{-0.5cm}&& \hspace{-0.05cm} \stackrel{(a)}{=} \alpha^{(t+1)}  \left( \bm{\mathcal{J}}_{  g}(\bm{y}^{(t+1)}) \right)^{\top} \notag\\
\hspace{-0.5cm}&& \hspace{1cm} \Bigg(  g(\bm{y}^{(1)}; \bm{W}_g)  -  \bm{x}  +  \sum_{i=0}^{t-1} \bm{\mathcal{J}}_{  g}(\bm{y}^{(i)})  \bigg(  \alpha_{i}^\prime  \left( \bm{\mathcal{J}}_{  g}(\bm{y}^{(i)}) \right)^{\!\!\top} \notag\\
\hspace{-0.5cm}&& \hspace{-0.2cm}\left[ \prod_{j=0}^{i-1} \! \left( \! \bm{I}_M \! + \! \alpha^{(j)} \bm{\mathcal{J}}_{\!g}(\bm{y}^{(j)}) \! \left( \! \bm{\mathcal{J}}_{\!g}(\bm{y}^{(j)}) \right)^{ \!\top} \right) \! \right] \!\!  \left( \! g(\bm{y}^{(1)}; \bm{W}_g) \!-\! \bm{x} \! \right) \! \bigg)
 \notag\\
\hspace{-0.5cm}&&  +  \bm{\mathcal{J}}_{ g}  ( \bm{y}^{(t)} ) \alpha^{(t)} \left(  \bm{\mathcal{J}}_{ g}  (\bm{y}^{(t)})  \right)^{ \top}  \notag\\
\hspace{-0.5cm}&&  \hspace{-0.2cm}\left[ \prod_{i=0}^{t-1} \! \left( \! \bm{I}_{ M} \! + \! \alpha^{(i)} \bm{\mathcal{J}}_{ \! g}(\bm{y}^{(i)}) \! \left(\! \bm{\mathcal{J}}_{\! g}(\bm{y}^{(i)})  \right)^{ \! \top}  \right) \! \right] \! \left(\! g(\bm{y}^{(1)}; \bm{W}_g)  \!- \! \bm{x}\! \right) \!\! \Bigg) \notag\\
\hspace{-0.5cm}&& = \! \alpha^{\!(t+1)} \! \left( \! \bm{\mathcal{J}}_{ \! g}(\bm{y}^{(t+1)}) \! \right)^{\!\!\!\top} \!\!\! \left( \! \bm{I}_M \! + \! \sum_{i=0}^{t-1} \bm{\mathcal{J}}_{ \! g}(\bm{y}^{(i)}) \! \Bigg( \! \alpha^{(i)} \!\! \left( \! \bm{\mathcal{J}}_{ \! g}(\bm{y}^{(i)}) \!\right)^{\!\!\top} \!\! \right.\notag\\
\hspace{-0.5cm}&& \hspace{1.5cm}  \left[ \prod_{j=0}^{i-1}  \left(  \bm{I}_{M}  +  \alpha^{(j)} \bm{\mathcal{J}}_{  g}(\bm{y}^{(j)})  \left( \bm{\mathcal{J}}_{  g}(\bm{y}^{(j)}) \right)^{\top} \right)  \right]  \Bigg)  \notag\\
\hspace{-0.5cm}&&  +  \bm{\mathcal{J}}_{  g}(\bm{y}^{(t)})   \alpha^{(t)} \left( \bm{\mathcal{J}}_{  g}(\bm{y}^{(t)}) \right)^{\top} \notag\\
\hspace{-0.5cm}&&  \hspace{-0.2cm}\left. \left[ \prod_{i=0}^{t-1} \! \left( \! \bm{I}_M \! + \! \alpha^{\!(i)} \! \bm{\mathcal{J}}_{ \! g}(\bm{y}^{\!(i)}) \! \left(  \bm{\mathcal{J}}_{ \! g}(\bm{y}^{\!(i)}) \! \right)^{\!\!\top}  \right) \!\!  \right] \! \right)  \!\! \left( g(\bm{y}^{(1)}; \bm{W}_g) - \bm{x} \right)
 \notag\\
\hspace{-0.5cm}&& \hspace{-0.05cm} \stackrel{(b)}{=} \alpha^{(t+1)} \! \left( \! \bm{\mathcal{J}}_{ \! g}(\bm{y}^{(t+1)}) \! \right)^{\!\!\top} \! \left( \! \bm{I}_M \! + \! \bm{\mathcal{J}}_{ \! g}(\bm{y}^{(t)})   \alpha^{(t)} \! \left( \! \bm{\mathcal{J}}_{  g}(\bm{y}^{(t)}) \! \right)^{\!\!\top}  \right) 
 \notag\\
\hspace{-0.5cm}&& \left[ \prod_{i=0}^{t-1} \! \left( \! \bm{I}_M \! + \! \alpha^{\!(i)} \! \bm{\mathcal{J}}_{ \! g}(\bm{y}^{\!(i)}) \! \left(  \bm{\mathcal{J}}_{ \! g}(\bm{y}^{\!(i)}) \! \right)^{\!\!\top}  \right) \!\!  \right] \! \left( g(\bm{y}^{(1)}; \bm{W}_g) - \bm{x} \right)
 \notag\\
\hspace{-0.5cm}&&\hspace{-0.2cm} =   \alpha^{(t+1)}   \left(  \bm{\mathcal{J}}_{  g}(\bm{y}^{(t+1)})  \right)^{ \top}  \notag\\
\hspace{-0.5cm}&&   \left[ \prod_{i=0}^{t} \! \left( \! \bm{I}_M \! + \! \alpha^{\!(i)} \! \bm{\mathcal{J}}_{ \! g}(\bm{y}^{\!(i)}) \! \left(  \bm{\mathcal{J}}_{ \! g}(\bm{y}^{\!(i)}) \! \right)^{\!\!\top}  \right) \!\!  \right] \! \left( g(\bm{y}^{(1)}; \bm{W}_g) - \bm{x} \right), \notag  
\end{eqnarray}
where (a) uses the induction hypothesis for $n \leq t$, and (b) holds if the following equality is true:
\begin{eqnarray}
&&\hspace{-0.8cm} \bm{I}_{ M}  +  \sum_{i=0}^{t-1}  \bm{\mathcal{J}}_{  g}(\bm{y}^{(i)})  \Bigg(  \alpha^{(i)}  \left( \bm{\mathcal{J}}_{  g}(\bm{y}^{(i)}) \right)^{\top} \\
&&\hspace{1.5cm}  \prod_{j=0}^{i-1}  \left(  \bm{I}_{M}  +  \alpha^{(j)} \bm{\mathcal{J}}_{  g}(\bm{y}^{(j)})  \left( \bm{\mathcal{J}}_{  g}(\bm{y}^{(j)})  \right)^{\top}  \right) \Bigg) \notag\\
&&\hspace{1.4cm} = \prod_{i=0}^{t-1} \left( \bm{I}_M + \alpha^{(i)} \bm{\mathcal{J}}_{  g}(\bm{y}^{(i)}) \left( \bm{\mathcal{J}}_{  g}(\bm{y}^{(i)}) \right)^\top \right).\notag
\end{eqnarray}
We now prove this equality by induction on $l$.

For the base case $l=1$:
\begin{eqnarray}
&&\hspace{-0.7cm}\text{LHS} = \bm{I}_M + \sum_{i=0}^{0} \bm{\mathcal{J}}_{ \! g}(\bm{y}^{(i)}) \Bigg( \alpha^{(i)} \left( \bm{\mathcal{J}}_{ \! g}(\bm{y}^{(i)}) \right)^\top\\
&&\hspace{-0.7cm} \left[ \prod_{j=0}^{i-1}\! \left( \! \bm{I}_M \! + \! \alpha^{(j)} \bm{\mathcal{J}}_{ \! g}(\bm{y}^{(j)}) \left( \bm{\mathcal{J}}_{ \! g}(\bm{y}^{(j)}) \!\right)^{\!\!\top}  \right) \! \right] \! \Bigg) = \bm{I}_M = \text{RHS}. \notag
\end{eqnarray}
Assume the equality holds for $l \leq t$, then for $l = t+1$:
\begin{eqnarray}
\hspace{-0.6cm} && \text{LHS}
=  \bm{I}_{ M}  +  \sum_{i=0}^{t}  \bm{\mathcal{J}}_{\!g}  (\bm{y}^{(i)}) \Bigg(  \alpha^{(i)} \left(  \bm{\mathcal{J}}_{\!g}  (\bm{y}^{(i)})  \right)^{\top}  \notag\\
\hspace{-0.6cm} && \hspace{ 0.3cm}  \left[ \prod_{j=0}^{i-1}  \left(  \bm{I}_{M}  +  \alpha^{(j)} \bm{\mathcal{J}}_{\!g}(\bm{y}^{(j)})  \left( \bm{\mathcal{J}}_{\!g}(\bm{y}^{(j)})  \right)^{\top}  \right)  \right]  \Bigg) \notag\\
\hspace{-0.6cm} && =  \bm{I}_{ M}  +  \sum_{i=0}^{t-1}  \bm{\mathcal{J}}_{\!g}  (\bm{y}^{(i)})  \Bigg(  \alpha^{(i)} \left(  \bm{\mathcal{J}}_{\!g}  (\bm{y}^{(i)})  \right)^{\top}   \notag\\
\hspace{-0.6cm} && \hspace{ 0.5cm} \prod_{j=0}^{i-1}  \left(  \bm{I}_{M}  +  \alpha^{(j)} \bm{\mathcal{J}}_{\!g}(\bm{y}^{(j)})  \left( \bm{\mathcal{J}}_{\!g}(\bm{y}^{(j)})  \right)^{\top}  \right)  \Bigg)  + \bm{\mathcal{J}}_{\!g}(\bm{y}^{(t)}) \notag\\
\hspace{-0.6cm} &&  \left( \!  \alpha^{(t)} \! \left( \! \bm{\mathcal{J}}_{\!g}(\bm{y}^{(t)}) \!  \right)^{\! \top} \!  \left[ \prod_{j=0}^{t-1} \! \left( \! \bm{I}_M \! + \! \alpha^{(j)} \! \bm{\mathcal{J}}_{\!g}(\bm{y}^{(j)}) \!\! \left( \bm{\mathcal{J}}_{\!g}(\bm{y}^{(j)}) \right)^{\!\!\top} \right) \! \right] \! \right)  \notag\\
\hspace{-0.6cm} && \hspace{-0.05cm} \stackrel{(c1)}{=} \prod_{i=0}^{t-1}  \left( \bm{I}_M + \alpha^{(i)} \bm{\mathcal{J}}_{\!g}(\bm{y}^{(i)})  \left( \bm{\mathcal{J}}_{\!g}(\bm{y}^{(i)}) \right)^{\top} \right)  + \bm{\mathcal{J}}_{\!g}(\bm{y}^{(t)})   \alpha^{(t)}    \notag\\
\hspace{-0.6cm} && \hspace{0.2cm} \left( \bm{\mathcal{J}}_{\!g}(\bm{y}^{(t)}) \right)^{\top}  \left[ \prod_{j=0}^{t-1} \left( \bm{I}_M + \alpha^{(j)} \bm{\mathcal{J}}_{\!g}(\bm{y}^{(j)}) \left( \bm{\mathcal{J}}_{\!g}(\bm{y}^{(j)}) \right)^{\top}  \right)  \right] \notag\\
\hspace{-0.6cm} && =  \left(  \bm{I}_{M}  +  \bm{\mathcal{J}}_{\!g}(\bm{y}^{(t)}) \alpha^{(t)}  \left( \bm{\mathcal{J}}_{\!g}(\bm{y}^{(t)})  \right)^{\top}  \right)  \notag\\
\hspace{-0.6cm} && \hspace{2.7cm} \prod_{i=0}^{t-1}  \left( \bm{I}_{M}  +  \alpha^{(i)} \bm{\mathcal{J}}_{\!g}(\bm{y}^{(i)})  \left( \bm{\mathcal{J}}_{\!g}(\bm{y}^{(i)})  \right)^{\top}  \right)  \notag\\
\hspace{-0.6cm} && = \prod_{i=0}^{t} \left( \bm{I}_M + \alpha^{(i)} \bm{\mathcal{J}}_{\!g}(\bm{y}^{(i)}) \left( \bm{\mathcal{J}}_{\!g}(\bm{y}^{(i)}) \right)^\top \right) = \text{RHS},
\end{eqnarray}
where (c1) uses the induction hypothesis for $l=t$. Therefore, the equality holds for $l=t+1$, completing the induction.

Thus, the formula \eqref{e:s_t} holds for $n=t+1$, proving the general expression.

\bibliographystyle{IEEEtran}
\bibliography{./References.bib}

\end{document}